\newcommand{\todo}[1]{}
\theoremstyle{definition}
\newtheorem{remark}{Remark}
\theoremstyle{plain}
\newtheorem{proposition}{Proposition}
\newtheorem{corollary}{Corollary}
\newcommand{\sd}{\ensuremath{\mathit{SD}}\xspace}
\newcommand{\dd}{\ensuremath{\mathit{DD}}\xspace}
\newcommand{\supp}{\mathrm{supp}}
\title{On the Indecisiveness of Kelly-Strategyproof\\ Social Choice Functions
}
\author{\name Felix Brandt \email brandtf@in.tum.de \\
	\name Martin Bullinger \email bullinge@in.tum.de \\
	\name Patrick Lederer \email ledererp@in.tum.de \\
	\addr Institut f\"{u}r Informatik \\ Technische Universit\"{a}t M\"{u}nchen\\
	Boltzmannstr. 3, 85748 Garching, Germany 
	}
\newcommand{\seti}[2]{i\hspace{3pt}\in\hspace{3pt} \medmuskip=0mu\relax
	\thickmuskip=1mu\relax [#1\dots#2]}
\newcommand{\set}[2]{\medmuskip=0mu\relax
	\thickmuskip=1mu\relax [#1\dots#2]}
\newcolumntype{L}[1]{>{\raggedright\let\newline\\\arraybackslash\hspace{0pt}}m{#1}}
\newcolumntype{C}[1]{>{\centering\let\newline\\\arraybackslash\hspace{0pt}}m{#1}}
\newcolumntype{R}[1]{>{\raggedleft\let\newline\\\arraybackslash\hspace{0pt}}m{#1}}
\newenvironment{profile}{\medmuskip=0mu\relax
	\thickmuskip=1mu\relax
	\tabular}{\endtabular\smallskip}
\newcommand{\profilewidth}{\textwidth}
\begin{document}

\maketitle

\begin{abstract}
	Social choice functions (SCFs) map the preferences of a group of agents over some set of alternatives to a non-empty subset of alternatives.
	The Gibbard-Satterthwaite theorem has shown that only extremely restrictive SCFs are strategyproof when there are more than two alternatives. For set-valued SCFs, or so-called social choice correspondences, the situation is less clear. There are miscellaneous---mostly negative---results using a variety of strategyproofness notions and additional requirements. The simple and intuitive notion of Kelly-strategyproofness has turned out to be particularly compelling because it is weak enough to still allow for positive results. For example, the Pareto rule is strategyproof even when preferences are weak, and a number of attractive SCFs (such as the top cycle, the uncovered set, and the essential set) are strategyproof for strict preferences. In this paper, we show that, for weak preferences, only indecisive SCFs can satisfy strategyproofness. In particular, \emph{(i)} every strategyproof rank-based SCF violates Pareto-optimality, \emph{(ii)} every strategyproof support-based SCF (which generalize Fishburn's C2 SCFs) that satisfies Pareto-optimality returns at least one most preferred alternative of every voter, and \emph{(iii)} every strategyproof non-imposing SCF returns the Condorcet loser in at least one profile.
	We also discuss the consequences of these results for randomized social choice.
\end{abstract}

\section{Introduction}

Whenever a group of agents aims at reaching a joint decision in a fair and principled way, they need to aggregate their individual preferences using a social choice function (SCF).
SCFs are traditionally studied by economists and mathematicians, but have also come under increasing scrutiny from computer scientists who are interested in their computational properties or want to utilize them in computational multiagent systems \shortcite<see, e.g.,>{BCE+14a,Endr17a}. 

An important phenomenon in social choice is that agents misrepresent their preferences in order to obtain a more preferred outcome. An SCF that is immune to strategic misrepresentation of preferences is called strategyproof.
\shortciteA{Gibb73a} and \shortciteA{Satt75a} have shown that only extremely restrictive single-valued SCFs are strategyproof: either the range of the SCF is restricted to only two outcomes or the SCF always returns the most preferred alternative of the same voter.
Perhaps the most controversial assumption of the Gibbard-Satterthwaite theorem is that the SCF must always return a single alternative \shortcite<see, e.g.,>{Gard76a,Kell77a,Barb77a,DuSc00a,Nehr00a,BDS01a,ChZh02a,Tayl05a}. This assumption is at variance with elementary fairness conditions such as anonymity and neutrality. For instance, consider an election with two alternatives and two voters such that each alternative is favored by a different voter. Clearly, both alternatives are equally acceptable, but single-valuedness forces us to pick a single alternative based on the preferences only.

We therefore study the manipulability of \emph{set-valued} SCFs (or so-called \emph{social choice correspondences}). When SCFs return sets of alternatives, there are various notions of strategyproofness, depending on the circumstances under which one set is considered to be preferred to another. When the underlying notion of strategyproofness is sufficiently strong, the negative consequences of the Gibbard-Satterthwaite theorem remain largely intact \shortcite<see, e.g.,>{DuSc00a,BDS01a,ChZh02a,Beno02a,Sato14a}.\footnote{We refer to \shortciteA{Barb10a} and \shortciteA{BSS19a} for a more detailed overview over this extensive stream of research.} In this paper, we are concerned with a rather weak---but natural and intuitive---notion of strategyproofness attributed to \shortciteA{Kell77a}. Several attractive SCFs have been shown to be strategyproof for this notion when preferences are strict \shortcite{Bran11c,BBH15a}. These include the top cycle, the uncovered set, the minimal covering set, and the essential set. However, when preferences are weak, these results break down and strategyproofness is not well understood in general.

\shortciteA{Feld79a} has shown that the Pareto rule is strategyproof according to Kelly's definition, even when preferences are weak. Moreover, the omninomination rule and the intersection of the Pareto rule and the omninomination rule are strategyproof as well \shortcite[Remark 1]{BSS19a}. These results are encouraging because they rule out impossibilities using Pareto-optimality and other weak properties.\footnote{For example, \shortciteA{BSS19a} have shown that Pareto-optimality is incompatible with anonymity and a notion of strategyproofness that is slightly stronger than Kelly's.}
In the context of strategic abstention (i.e., manipulation by deliberately abstaining from an election), even more positive results can be obtained. \shortciteA{BBGH18a} have shown that all of the above mentioned SCFs that are strategyproof for strict preferences are immune to strategic abstention even when preferences are weak.

A number of negative results were shown for severely restricted classes of SCFs. \shortciteA{Kell77a} and \shortciteA{Barb77a} 
have shown independently that there is no strategyproof SCF that satisfies quasi-transitive rationalizability. However, this result suffers from the fact that quasi-transitive rationalizability is almost prohibitive on its own \shortcite<see, e.g.,>{MCSo72a}.\footnote{This is acknowledged by \shortciteA{Kell77a} who writes that ``one plausible interpretation of such a theorem is that, rather than demonstrating the impossibility of reasonable strategy-proof social choice functions, it is part of a critique of the regularity [rationalizability] conditions.''} In subsequent work by \shortciteA{MaPa81a} and \shortciteA{Band83c}, quasi-transitive rationalizability has been replaced with weaker conditions such as minimal binariness or quasi-binariness, which are still very demanding and violated by most SCFs. \shortciteA{Barb77b} has shown that positively responsive SCFs fail to be strategyproof under mild assumptions. However, positively responsive SCFs are almost always single-valued and of all commonly considered SCFs only Borda's rule and Black's rule satisfy this criterion. More recently, \shortciteA[Theorem~8.1.2]{Tayl05a} has proven that every SCF that returns the set of all weak Condorcet winners whenever this set is non-empty fails to be strategyproof. This result was strengthened by \shortciteA{Bran11c}, who showed that every SCF that uniquely returns the (strict) Condorcet winner whenever one exists fails to be strategyproof. \shortciteA{BSS19a} have shown with the help of computers that every Pareto-optimal SCF whose outcome only depends on the pairwise majority margins can be manipulated.

Note that---in contrast to most other work---the strong impossibility theorems by \shortciteA{Bran11c} and \shortciteA{BSS19a} require weak preference relations, i.e., these authors assume that preference relations are transitive and complete, but not necessarily anti-symmetric. We follow this approach because ties arise quite naturally in many applications. In fact, we see little justification to assume that all agents entertain strict preferences. For example, a voter who strongly cares about the environment may deem all parties that deny climate change equally unacceptable. Moreover, preferential voting rules are often criticized for being impractical because they put an unduly heavy burden on voters by asking them to submit a complete and strict ranking of, say, 20 alternatives. This burden can be reduced by allowing voters to express indifferences between similar alternatives. The case of indifferences is even more striking when the set of alternatives consists of partitions of agents or assignments of objects to agents. In these settings, agents are likely to be indifferent between coalitions or assignments in which they are grouped with the same agents or in which they receive the same objects. 

For these reasons, we study strategyproofness based on the assumption that voters can express indifferences between alternatives. In particular, we investigate three broad classes of SCFs: \emph{rank-based} SCFs (which include all scoring rules), \emph{support-based} SCFs (which generalize Fishburn's C2 SCFs), and \emph{non-imposing} SCFs (which return every alternative as the unique winner for some preference profile). An overview of the three classes and typical examples of SCFs belonging to these classes are given in \Cref{fig:SCFs}. The classes are unrelated in a set-theoretic sense: for any subset of classes, there exist SCFs which lie precisely in these classes. For instance, Borda's rule is contained in all three classes. Taken together, they cover virtually all SCFs commonly considered in the literature.
 
For rank-based and support-based SCFs, we show that Pareto-optimality and strategyproofness imply that every voter is a nominator, i.e., the resulting choice sets contain at least one most preferred alternative of every voter. In the case of rank-based SCFs, this entails an impossibility (\Cref{thm1}) whereas for support-based SCFs it demonstrates a high degree of indecisiveness in the sense that the SCF tends to return large choice sets (\Cref{thm2}). 
For non-imposing SCFs, we show that strategyproofness implies that the Condorcet loser has to be returned in at least one preference profile (\Cref{thm4}).
The latter result remarkably holds without imposing fairness conditions such as anonymity or neutrality and can again be phrased in terms of indecisiveness: every strategyproof SCF that satisfies the Condorcet loser property will never return certain alternatives alone. Hence, our main theorems can be summarized by the observation that strategyproofness requires an SCF to return unreasonably large choice sets for some preference profiles. 

All these results rely on two auxiliary statements (\Cref{lem:NU,lem:CL}) which discuss the relationship between decisive, nominating, and vetoing groups of voters. Roughly, these concepts address how much influence a group of voters has when acting unanimously, e.g., a group of voters is decisive if it can ensure a subset of its best alternatives to be chosen and vetoing if it can ensure its unique least preferred alternative not to be chosen.
Due to their universality, these lemmas may be of independent interest. Our results can also be interpreted in the context of randomized social choice (where the outcome is a lottery over the alternatives instead of a set of alternatives). In more detail, all our axioms can be transferred to the randomized setting, and thus we also derive strong impossibilities for randomized social choice. 

Even though our main results are rather negative, they are important to improve our understanding of strategyproof SCFs. Much more positive results are obtained by making minuscule adjustments to the assumptions such as restricting the domain of preferences to strict preferences, weakening the underlying notion of strategyproofness, or replacing strategic manipulation with strategic abstention \shortcite<see, e.g.,>{Nehr00a,Bran11c,BBGH18a}. In all of these cases, a small number of support-based Condorcet extensions such as the top cycle, the uncovered set, the minimal covering set, and the essential set constitute appealing positive examples.

\begin{figure}[tb]
	\centering
	\begin{tikzpicture}[scale=0.8, every node/.style={scale=0.8}]
	  \tikzset{venn circle/.style={draw,circle,minimum width=#1}}

	  \node [venn circle =6.5cm] (A) at (-.2,0) {};
	  \node [anchor=north, xshift = -1cm, yshift = 2cm] at (A) {$2$-plurality};
	  \node [xshift = -2.8cm, yshift = 3cm] at (A) {\textbf{Rank-based}};

	  \node [venn circle = 6.5cm] (C) at (0:3.7cm) {};
	  \node [anchor=north, xshift = 1cm, yshift = 2cm] at (C) {$2$-Copeland};
	  \node [xshift = 3cm, yshift = 3cm] at (C) {\textbf{Support-based}};
	  \node[anchor=north, align = center, yshift = 1.7cm] at (barycentric cs:A=1/2,C=1/2 ) {$2$-Borda\\[1ex] \emph{constant rules}};
	  
	  \node [draw,ellipse,minimum width=9cm, minimum height = 6cm] (B) at ([yshift=-2.5cm]$(A)!1/2!(C)$) {};
	  \node [yshift = -1.5cm, align = center] at (B) {Young\\ Dodgson\\[1ex] \emph{scoring runoff rules}};
	  \node [yshift = -3.35cm] at (B) {\textbf{Non-imposing}};
	  
	  \node[align = center, xshift = -1.6cm, yshift = -0.2cm] at (barycentric cs:A=1/2,B=1/2 ) {Plurality\\Omninomination\\[1ex] \emph{most scoring rules}};
	  \node[align = center, xshift = 1.5cm, yshift = 0cm] at (barycentric cs:B=1/2,C=1/2 ) {Pareto\\Copeland\\[1ex] \emph{common}\\ \emph{Condorcet extensions}};
	  \node at (barycentric cs:A=1/3,B=1/3,C=1/3 ){Borda};
	\end{tikzpicture}
	\caption{The classes of rank-based, support-based, and non-imposing SCFs and typical examples. $2$-plurality, $2$-Copeland, and $2$-Borda return all alternatives whose respective score is at least as large as the second-highest score. All scoring rules except Borda's rule are rank-based, non-imposing, but not support-based. Common Condorcet extensions include the top cycle, the uncovered set, the minimal covering set, the essential set, the Simpson-Kramer rule, Nanson's rule, Schulze's rule, and Kemeny's rule. We refer to \shortciteA[Chapters 2-5]{BCE+14a} for definitions of these SCFs.
	\label{fig:SCFs}}
\end{figure}

\section{The Model}
Let $N=\{1,\dots,n\}$ denote a finite set of voters and let $A=\{a,b,\dots\}$ denote a finite set of $m$ alternatives. Moreover, let ${\set{x}{y}} = \{i\in N\colon x\leq i\leq y\}$ denote the subset of voters from $x$ to $y$ and note that $\set{x}{y}$ is empty if $x>y$. Every voter $i\in N$ is equipped with a \emph{weak preference relation} $\succsim_i$, i.e., a complete and transitive binary relation on~$A$. We denote the strict part of $\succsim_i$ by $\succ_i$, i.e., $x\succ_i y$ if and only if $x\succsim_i y$ and $y\not\succsim_i x$, and the indifference part by $\sim_i$, i.e., $x\sim_i y$ if and only if $x\succsim_iy$ and $y\succsim_ix$. We compactly represent preference relations as comma-separated lists, where sets of alternatives express indifferences. For example, $x \succ y \sim z$ is represented by $x,\{y,z\}$. Furthermore, we call a preference relation $\succsim$ \emph{strict} if its irreflexive part is equal to its strict part $\succ$. The set of all weak preference relations on $A$ is called $\mathcal{R}$. A \emph{preference profile} $R\in\mathcal{R}^n$ is an $n$-tuple containing the preference relation of every voter $i\in N$. When defining preference profiles, we specify a set of voters who share the same preference relation by writing the set directly before the preference relation. For instance, $\set{x}{y}:$ $a,b,c$ means that all voters $\seti{x}{y}$ prefer $a$ to $b$ to $c$. We omit brackets for singleton sets.

Our central object of study are \emph{social choice functions (SCFs)}, or so-called social choice correspondences, which map preference profiles to non-empty sets of alternatives, i.e., functions of the form $f:\mathcal{R}^n\rightarrow 2^A\setminus\{\emptyset\}$.

\subsection{Axioms for Social Choice Functions}

We now introduce axioms that formalize desirable properties for SCFs, all of which are well-known in the literature. A basic fairness condition is anonymity, which requires that all voters are treated equally: an SCF $f$ is \emph{anonymous} if $f(R)=f(R')$ for all preference profiles $R$, $R'$ for which there is a permutation $\pi:N\rightarrow N$ such that ${\succsim'_i}={\succsim_{\pi(i)}}$ for all~$i\in N$. 

Perhaps one of the most prominent axioms in economic theory is Pareto-optimality, which is based on the notion of Pareto-dominance: an alternative $x$ \emph{Pareto-dominates} another alternative $y$ if $x\succsim_i y$ for all $i\in N$ and there is a voter $j\in N$ with $x\succ_j y$.
An alternative is \emph{Pareto-optimal} if it is not Pareto-dominated by any other alternative. This idea leads to the \emph{Pareto rule} which returns all Pareto-optimal alternatives. Finally, an SCF~$f$ is \emph{Pareto-optimal} if it never returns Pareto-dominated alternatives. 

An axiom that is closely related to Pareto-optimality is \emph{near unanimity}, as introduced by \shortciteA{Beno02a}. Near unanimity requires that $f(R)=\{x\}$ for all alternatives $x\in A$ and preference profiles $R$ in which at least $n-1$ voters uniquely top-rank $x$. The more voters there are, the more compelling near unanimity is.

A natural weakening of these axioms is \emph{non-imposition} which requires that for every alternative $x\in A$, there is a profile $R$ such that $f(R)=\{x\}$. For single-valued SCFs, non-imposition is almost imperative because it merely requires that the SCF is surjective. For set-valued SCFs, as considered in this paper, this is not necessarily the case. For example, every SCF that always returns at least two alternatives fails non-imposition (see, for example, $2$-plurality, $2$-Borda, and $2$-Copeland in \Cref{fig:SCFs}). 

An influential concept in social choice theory is that of a Condorcet winner, which is an alternative that wins all pairwise majority comparisons. For formally defining this term, let the pairwise \emph{support} of $x$ over $y$ denote the number of voters who strictly prefer $x$ to~$y$, i.e., $s_{xy}(R)=|\{i\in N\colon x\succ_iy\}|$. Then, an alternative $a\in A$ is a \emph{Condorcet winner} if $s_{ax}(R)>s_{xa}(R)$ for all $x\in A\setminus \{a\}$. An SCF is \emph{Condorcet-consistent} or a so-called \emph{Condorcet extension} if it uniquely returns the Condorcet winner whenever one exists. 

Analogously, an alternative~$a$ is called \emph{Condorcet loser} if $s_{xa}(R)>s_{ax}(R)$ for all $x\in A\setminus \{a\}$. An SCF $f$ satisfies the \emph{Condorcet loser property} if $x\not\in f(R)$ whenever $x$ is the Condorcet loser in $R$. While there are Condorcet extensions that violate the Condorcet loser property (e.g., the Simpson-Kramer rule) and SCFs that satisfy the Condorcet loser property but fail Condorcet-consistency (e.g., Borda's rule), the Condorcet loser property ``feels'' weaker. This follows the intuition that both properties affect exactly the same number of preference profiles, but the Condorcet loser property only excludes a single alternative (and otherwise leaves a lot of freedom) whereas Condorcet-consistency completely determines the (singleton) choice set.

\subsection{Strategyproofness}
One of the central problems in social choice theory is manipulation, i.e., voters may lie about their true preferences to obtain a more preferred outcome. For single-valued SCFs, it is clear what constitutes a more preferred outcome. In the case of set-valued SCFs, there are various ways to define a manipulation depending on the assumptions about the voters' preferences over sets of alternatives. Here, we make a simple and natural assumption first considered by \shortciteA{Kell77a}:
a voter $i$ weakly prefers a set $X$ to another set $Y$, denoted by $X\succsim_i Y$, if and only if $x\succsim_i y$ for all $x\in X, y\in Y$. Thus, the strict part of this preference extension is
\begin{align*}
	X~\succ_i~Y\text{ if and only if } & \text{for all }x\in X, y\in Y, x\succsim_iy \text{ and}\\&\text{there are }x'\in X, y'\in Y\text{ with }x'\succ_i y'\text.
\end{align*}

An SCF is manipulable if a voter can improve his outcome by lying about his preferences. Formally, an SCF $f$ is \emph{manipulable} if there are a voter $i\in N$ and preference profiles $R$, $R'$ such that ${\succsim_j}={\succsim_j'}$ for all $j\in N\setminus\{i\}$ and $f(R')~\succ_i~f(R)$. Moreover, $f$ is \emph{strategyproof} if it is not manipulable. 

These assumptions can, for example, be justified by considering a randomized tie-breaking procedure (a so-called lottery) that is used to select a single alternative from every set of alternatives returned by the SCF. 
We then have that $X\succ_i Y$ if and only if all lotteries with support~$X$ yield strictly more expected utility than all lotteries with support~$Y$ for all utility functions that are ordinally consistent with $\succsim_i$ \shortcite<see, e.g.,>{Gard79a,BSS19a}.

Note that, in the proofs of this paper, we often do not need the full power of strategyproofness. Instead, we mainly consider two types of manipulations: 
either the original choice set only consists of the manipulator's least preferred alternatives, or the new choice set only consists of the manipulator's most preferred alternatives.
In order to formalize these situations, we define $T_i(R)$ as the set of voter $i$'s top-ranked alternatives in $R$ and $B_i(R)$ as the set of voter $i$'s bottom-ranked alternatives in $R$. We then derive the following two consequences of strategyproofness, where $R$ and $R'$ are two preference profiles that only differ in the preference relation of voter $i$:

	\begin{enumerate}[label=\emph{(SP\arabic*)}, ref=\emph{SP\arabic*}, itemindent=0.25cm]
	    \item \label{imp1} If $f(R)\subseteq B_i(R)$, then $f(R')\subseteq B_i(R)$.
	    \item \label{imp2} If $f(R)\subseteq T_i(R')$, then $f(R')\subseteq T_i(R')$.
	\end{enumerate}
	
	\ref{imp1} states that, if a subset of voter $i$'s least preferred alternatives is the choice set for $R$, then the choice set after a manipulation is also a subset of voter $i$'s least preferred alternatives; any other outcome constitutes a manipulation for voter $i$. On the other hand, \ref{imp2} states that turning the current choice set $f(R)$ into a subset of voter $i$'s most preferred alternatives in $R'$ results into a choice set $f(R')$ that is a subset of $T_i(R')$. If this was not true, voter $i$ could manipulate $f$ by deviating from $R'$ to $R$.
	
	For a better understanding of strategyproofness, \ref{imp1}, and \ref{imp2}, consider the following profiles $R$ and $R'$ and assume that $f$ is a strategyproof SCF.\smallskip
	
	\begin{profile}{C{0.05\profilewidth} C{0.25\profilewidth} C{0.25\profilewidth} C{0.3\profilewidth}}
		{$R$:} & $1$: $\{a, b\}, c, \{d,e\}$ & $2$: $\{a, c\}, \{d,e\}, b$  & $3$: $d,c,a,b,e$\\
		{$R'$:} & $1$: $\{b,c\}, a, \{d,e\}$ & $2$: $\{a, c\}, \{d,e\}, b$  & $3$: $d,c,a,b,e$
	\end{profile}
	
	If $f(R)=\{d\}$, then \ref{imp1} requires that $f(R')\subseteq \{d,e\}$ since voter $1$ prefers every set $X$ with $X\not\subseteq \{d,e\}$ to $f(R)$. Furthermore, if $f(R)=\{c\}$, then \ref{imp2} implies that $f(R')\subseteq \{b,c\}$; otherwise voter $1$ can manipulate by reverting from $R'$ to $R$. Finally, note that strategyproofness is stronger than the conjunction of \ref{imp1} and \ref{imp2}, e.g., if $f(R)=\{c,d\}$, then $f(R')\neq \{b,c\}$ as voter $1$ could manipulate otherwise.

\subsection{Decisive, Nominating, and Vetoing Groups of Voters}

A common concern in social choice theory is that single voters or groups of voters might be more influential than others \shortcite<see, e.g.,>{LeWe11a}. Perhaps the most prominent example of such a notion are dictators: a voter $i\in N$ is a \emph{dictator} for an SCF $f$ if $f(R)$ always chooses a subset of voter $i$'s most preferred alternatives, i.e., if $f(R)\subseteq T_i(R)$ for all profiles $R$. The existence of dictators is usually undesirable because it means that a single voter can determine the outcome of the election alone. 

A related but far less restrictive concept concerns the notion of nominators: a voter $i\in N$ is a \emph{nominator} for an SCF $f$ if $f(R)$ always contains at least one of his most preferred alternatives. More formally, a voter $i\in N$ is a nominator for an SCF $f$ if $f(R)\cap T_i(R)\neq \emptyset$ for all preference profiles $R$. Nominators are weak dictators in the sense that they can always force an alternative into the choice set by reporting it as their unique top choice. 

Finally, we formalize the converse idea that a voter might be able to prohibit an alternative from being chosen. This leads to the notion of a \emph{vetoer}, which is a voter $i\in N$ such that $f(R)$ does never contain the uniquely least preferred alternative of voter $i$. If a vetoer does not report a uniquely least preferred alternative, the corresponding SCF is not restricted. Formally, a voter $i\in N$ is a vetoer for an SCF $f$ if $f(R)\cap B_i(R)=\emptyset$ for all preference profiles $R$ with $|B_i(R)|=1$. 

In the context of social choice, the existence of dictators, nominators, and vetoers is often undesirable as these notions formalize that some voters have an undesirably large impact on the outcome. For instance, if a voter is a nominator, he can force an alternative $x$ into the choice set even if all other voters agree that $x$ is the worst option. 
To avoid these problems, it is natural to consider generalizations of dictators, nominators, and vetoers to groups of voters. We opt for a rather weak generalization of these axioms and require that all voters in the group need to report the same preference relation to influence the SCF.

We say that a non-empty set of voters $I\subseteq N$ is \emph{decisive} if $f(R)$ is a subset of the best alternatives of the voters $i\in I$ for all profiles $R$ such that ${\succsim_i}={\succsim_j}$ for all $i,j\in I$. The concept of decisive groups of voters is best known from Arrow's proof of his impossibility theorem \shortcite{Arro51a}. Similarly, a non-empty set of voters $I\subseteq N$ is \emph{nominating} if $f(R)$ contains at least one of the most preferred alternatives of the voters $i\in I$ for all profiles $R$ such that ${\succsim_i}={\succsim_j}$ for all $i,j\in I$, and \emph{vetoing} if $f(R)$ excludes the uniquely least preferred alternative of the voters $i\in I$ for all such profiles. 

The notions of decisive, nominating, and vetoing groups of voters are far less restrictive than the corresponding single voter notions. More precisely, if a single voter fulfills any of these properties, then also every group containing this agent does. In fact, many desirable axioms even imply that sufficiently large groups of voters need to be decisive or vetoing. For instance, Pareto-optimality implies that the set of all voters is decisive, and near unanimity for strategyproof SCFs is equivalent to the requirement that every group of $n-1$ voters is decisive.\footnote{The claim on near unanimity may fail for manipulable SCFs as near unanimity only affects profiles where $n-1$ voters \emph{uniquely} top-rank the same alternative, whereas a decisive group $I$ affects the outcome for all profiles $R$ with ${\succsim_i}={\succsim_j}$ for all $i,j\in I$.} Furthermore, the Condorcet loser property implies that every group $I$ with $|I|>\frac{n}{2}$ is vetoing. As we will show, there are strong relationships between the notions of decisive, nominating, and vetoing groups for strategyproof SCFs. 

\subsection{Rank-Basedness and Support-Basedness}

In this section, we introduce two classes of anonymous SCFs that capture many of the SCFs commonly studied in the literature: rank-based and support-based SCFs. The basic idea of rank-basedness is that voters assign ranks to the alternatives and that an SCF should only depend on the ranks of the alternatives, but not on which voter assigns which rank to an alternative. In order to formalize this idea, we first need to define the rank of an alternative. In the case of strict preferences, this is straightforward: the rank of alternative~$x$ according to $\succsim_i$ is $\bar{r}(\succsim_i,x)=|\{y\in A\colon y\succsim_i x\}|$ \shortcite{Lasl96a}. By contrast, there are multiple possibilities how to define the rank in the presence of ties. We define a new and very weak notion of rank-basedness for weak preferences, making our results only stronger. To this end, define the \emph{rank tuple} of $x$ with respect to $\succsim_i$ as 
\begin{align*}
	r(\succsim_i,x)&=(\bar{r}(\succ_i,x), \bar{r}(\sim_i,x))\\
	&=(|\{y\in A\colon y\succ_i x\}|, |\{y\in A\colon y\sim_i x\}|)\text.
\end{align*}

The rank tuple contains more information than many other generalizations of the rank and therefore, it leads to a more general definition of rank-basedness. Next, we define the \emph{rank vector} of an alternative $x$ which contains the rank tuple of $x$ with respect to every voter in increasing lexicographic order, i.e., $r^*(R,x)=(r(\succsim_{i_1},x), r(\succsim_{i_2},x), \dots, r(\succsim_{i_n},x))$ where $\bar{r}(\succ_{i_j},x)\leq \bar{r}(\succ_{i_{j+1}},x)$ and if $\bar{r}(\succ_{i_j},x)= \bar{r}(\succ_{i_{j+1}},x)$, then $\bar{r}(\sim_{i_j},x)\leq \bar{r}(\sim_{i_{j+1}},x)$. Finally, the \emph{rank matrix} $r^*(R)$ of the preference profile $R$ contains the rank vectors as rows. An SCF $f$ is called \emph{rank-based} if $f(R)=f(R')$ for all preference profiles $R, R'\in \mathcal{R}^n$ with $r^*(R)=r^*(R')$. 
The class of rank-based SCFs contains many popular SCFs such as all scoring rules or the omninomination rule, which returns all top-ranked alternatives.\footnote{We refer to \shortciteA[Chapters 2-5]{BCE+14a} for the definitions of these and all following SCFs.}

A similar line of thought leads to support-basedness, which is based on the pairwise support of an alternative $x$ against another one $y$. Recall that the pairwise support refers to the number of voters who strictly prefer $x$ to $y$, i.e., $s_{xy}(R)=|\{i\in N\colon x\succ_iy\}|$. 
We define the \emph{support matrix} $s^*(R)=(s_{xy}(R))_{x,y\in A}$ which contains the supports for all pairs of alternatives. Finally, an SCF $f$ is \emph{support-based} if $f(R)=f(R')$ for all preference profiles $R, R'\in\mathcal{R}^n$ with $s^*(R)=s^*(R')$. Note that support-basedness is a new generalization of Fishburn's C2 to weak preferences \shortcite{Fish77a}. Hence, many well-known SCFs such as Borda's rule, Kemeny's rule, the Simpson-Kramer rule, Nanson's rule, Schulze's rule, the Pareto rule, and the top cycle are support-based. 

Support-basedness is less restrictive than \emph{pairwiseness}, which requires that $f(R)=f(R')$ for all preference profiles $R,R'\in\mathcal{R}^n$ with $s_{ab}(R)-s_{ba}(R)=s_{ab}(R')-s_{ba}(R')$ for all $a,b\in A$ \shortcite<see, e.g.,>{BSS19a}. For example, the Pareto rule is support-based, but fails to be pairwise. Another important subclass of support-based SCFs are majoritarian ones, which are merely based on the majority relation. To this end, we define the \emph{majority relation~$\succsim_R$} of a profile $R$ as ${\succsim_R}=\{(a,b)\in A^2\colon s_{ab}(R)\geq s_{ba}(R)\}$. Then, an SCF $f$ is \emph{majoritarian} if $f(R)=f(R')$ for all preference profiles $R$ and $R'$ with ${\succsim_R}={\succsim_{R'}}$ \shortcite<see, e.g.,>{BBGH18a}. For instance, the top cycle is majoritarian, whereas all other previous examples rely on the exact supports for computing the outcomes and thus fail this axiom.

In order to illustrate the definitions of rank-based, support-based, and majoritarian SCFs, we discuss two classical examples. First, consider the plurality rule, which returns all alternatives $x$ that maximize $|\{i\in N\colon \bar{r}(\succ_i, x)=0\}|$. By definition, this SCF is rank-based, but it is not support-based. The latter claim follows by considering the following preference profiles $R$ and $R'$ because $s^*(R)=s^*(R')$ but the plurality rule chooses $\{a\}$ for $R$ and $\{a,b\}$ for $R'$.\smallskip

	\begin{profile}{C{0.05\profilewidth} C{0.3\profilewidth} C{0.3\profilewidth} C{0.3\profilewidth}}
		{$R$:} & $1$: $\{a, b\}, c$ & $2$: $a,b,c$  & $3$: $c,b,a$\\
		{$R'$:} & $1$: $\{a, b\}, c$ & $2$: $a,c,b$  & $3$: $b,c,a$
	\end{profile}
	
As second example, consider the Pareto rule, which chooses all Pareto-optimal alternatives. This SCF is support-based because an alternative $x$ Pareto-dominates another alternative $y$ if and only if $s_{xy}(R)>0$ and $s_{yx}(R)=0$. On the other hand, it violates rank-basedness because there are profiles with the same rank matrix but different sets of Pareto-optimal alternatives (see Claim 1 in the proof of \Cref{thm1} for details). Finally, the Pareto rule is not majoritarian since it chooses $\{a\}$ for $\bar R$ and $\{a,b\}$ for $\bar R'$, but ${\succsim_{\bar R}}={\succsim_{\bar R'}}$.\smallskip

	\begin{profile}{C{0.05\profilewidth} C{0.3\profilewidth} C{0.3\profilewidth} C{0.3\profilewidth}}
		{$\bar R$:} & $1$: $\{a, b\},c$ & $2$: $\{a,b\},c$  & $3$: $a,b,c$\\
		{$\bar R'$:} & $1$: $b,a,c$ & $2$: $a,b,c$  & $3$: $a,b,c$
	\end{profile}

\section{Results}

The unifying theme of our results is that strategyproofness requires a high degree of indecisiveness. In more detail, we show that every voter is a nominator for all rank-based and support-based SCFs that satisfy Pareto-optimality and strategyproofness. Consequently, such SCFs have to choose a large number of alternatives for most preference profiles as one of the best alternatives of every voter needs to be in the choice set. 
For the very broad class of non-imposing SCFs, we show that every strategyproof SCF violates the Condorcet loser property. Put differently, for every strategyproof SCF that satisfies the Condorcet loser property, there is an alternative $x$ that is not returned as unique winner even if it is unanimously top-ranked.

In order to prove the claim for rank-based and support-based SCFs, we focus on its contrapositive, i.e., we assume that there is a rank-based or support-based SCF $f$ that satisfies Pareto-optimality and strategyproofness and for which a voter $i\in N$ is not a nominator. 
We first show that a group of voters is not nominating for a Pareto-optimal and strategyproof SCF if and only if its complement is decisive.

\begin{restatable}{lemma}{lemNU}\label{lem:NU}
Let $f$ be a Pareto-optimal and strategyproof SCF that is defined for $m\geq 3$ alternatives and $n\geq 2$ voters. A group of voters $I$ with $\emptyset \subsetneq I\subsetneq N$ is not nominating for $f$ if and only if $N\setminus I$ is decisive for $f$. 
\end{restatable}

\begin{proof}
	Let $f$ denote a Pareto-optimal and strategyproof SCF and consider an arbitrary set of voters $I$ with $\emptyset\subsetneq I\subsetneq N$. First, we show that $I$ is not nominating for $f$ if $N\setminus I$ is decisive. This follows immediately by considering a preference profile $R$ in which all voters in $N\setminus I$ report an alternative $a$ as best choice and are indifferent between the alternatives in $A\setminus \{a\}$, and the voters in $I$ report another alternative $b$ as their unique top choice and are indifferent between all alternatives in $A\setminus \{b\}$. Then, $f(R)=\{a\}$ because $N\setminus I$ is decisive for $f$, which proves that $I$ is not nominating as this condition requires that $b\in f(R)$.
	
	For the other direction, suppose that the group $I$ is not nominating for $f$. Our goal is to show that the group $N\setminus I$ is decisive for $f$ and we observe for this that there is a profile $R^0$ such that $f(R^0)\cap T_i(R^0)=\emptyset$ and ${\succsim_i^0}={\succsim_j^0}$ for all voters $i,j\in I$ because $I$ is not nominating for $f$. Subsequently, we will apply multiple transformations to $R^0$: first, we deduce a profile $R'$ such that $f(R')\cap T_i(R') = \emptyset$ for all $i\in I$ and $f(R')=\{x\}$ for some alternative $x\in f(R^0)$. Secondly, we infer from this profile that $f(R)=\{x\}$ for all preferences profiles $R$ in which the voters $j\in N\setminus I$ prefer $x$ uniquely the most. As third step, we generalize this observation from a single alternative to all alternatives. This is reminiscent of the so-called \emph{field expansion lemma} in proofs of Arrow's theorem \shortcite<see, e.g.,>{Sen86a}. Finally, we extend our analysis to the case where the voters in $N\setminus I$ may top-rank multiple alternatives. For an easier notation of the subsequent arguments, we assume that $I=\{1,\dots,k\}$ for some $k\in \{1,\dots, n-1\}$; this is without loss of generality because all our arguments are independent of the naming of the voters.\bigskip

	\textbf{Step 1:} As first step, we let the voters $j\in N\setminus I=\{k+1,\dots,n\}$ replace their preference relations in $R^0$ sequentially such that they prefer the alternatives in $f(R^0)$ the most. More formally, this means that we consider a sequence of preference profiles $R^{0,0}, \dots, R^{0,n-k}$ such that $R^{0,0}=R^0$ and $R^{0,i}$ evolves out of $R^{0,i-1}$ by assigning voter $k+i$ a preference relation such that $T_{k+i}(R^{0,1})=f(R^0)$. For each $i\in \{1,\dots, n-k\}$, \ref{imp2} implies that $f(R^{0,i})\subseteq f(R^0)$ if $f(R^{0,i-1})\subseteq f(R^0)$ because $f(R^{0,i-1})\subseteq T_{k+i}(R^{0,i})$. Since we start this process at the profile $R^0$, we derive a preference profile $R^1=R^{0,n-k}$ with $f(R^1)\subseteq f(R^0)$. 
	Next, let $a\in f(R^0)$ denote an alternative such that $a\succsim_i b$ for all $b\in f(R^0)$ and $i\in I$, i.e., $a$ is the most preferred alternative of the voters $i\in I$ in $f(R^0)$. Such an alternative exists since ${\succsim_i^1}={\succsim_j^1}$ for all $i,j\in I$. Moreover, let $B$ denote the set of alternatives such that $b\sim_i^1 a$ for all $b\in B$ and $i\in I$. As next step, we sequentially replace the preference relations of the voters $i\in I=\{1,\dots, k\}$ in $R^1$ with a preference relation where all alternatives in $T_i(R^1)$ are preferred to $a$, which, in turn, is preferred to all alternatives in $A \setminus (T_i(R^1) \cup \{a\})$. More formally, we consider again a sequence of preference profiles $R^{1,0},\dots, R^{1,k}$ such that $R^{1,0}=R^1$ and $R^{1,i}$ is derived from $R^{1,i-1}$ by modifying the preference relation of voter $i$ as described in the last sentence. Next, we will show for all $i\in \{1,\dots, k\}$ that if $f(R^{1,i-1})\subseteq B$, then $f(R^{1,i})\subseteq B$. Observe for this that, for all profiles $R^{1,i}$, alternative $a$ Pareto-dominates all alternatives $x\in A$ with $a\succ_j^1 x$ for all $j\in I$. Hence, Pareto-optimality ensures that $x\not\in f(R^{1,i})$ for all these alternatives. This means that if $f(R^{1,i-1})\subseteq B$, then $f(R^{1,i})\subseteq B$ because voter $i$ strictly prefers every Pareto-optimal alternative $x\in A\setminus B$ to all alternatives in $B$, i.e., every set of Pareto-optimal alternatives $X\not\subseteq B$ constitutes a manipulation for voter $i$. Finally, observe that $f(R^1)\subseteq B$ because $f(R^1)\subseteq f(R^0)$ and all alternatives $x\in f(R^0)$ with $a \succ_i^1 x$ are Pareto-dominated in $R^1$. We can therefore repeatedly apply the previous argument to derive that $f(R^2)\subseteq B$ for the profile $R^2=R^{1,k}$. Moreover, observe that in $R^2$, all voters $i\in I$ prefer $a$ to all alternatives $y\in A\setminus (T_i(R^0)\cup \{a\})$ and the voters $j\in N\setminus I$ top-rank $a$. Hence, $a$ Pareto-dominates all other alternatives in $B$, which implies that $f(R^2)=\{a\}$.\bigskip

	\textbf{Step 2:} Given the preference profile $R^2$ from the last step, we show that $f(R)=\{a\}$ for all preferences profiles $R$ in which the voters in $N\setminus I$ prefer $a$ uniquely the most. We deduce this result by modifying and analyzing the profile $R^2$. First, we sequentially change the preference relation of all voters $j\in N\setminus I$ such that they prefer $a$ uniquely the most and an alternative $b\in T_i(R^2)$ (for $i\in I$) uniquely the second most. Formally, this yields another sequence of preference profiles $R^{2,0},\dots, {R}^{2,n-k}$ such that ${R}^{2,0}=R^2$ and ${R}^{2,i}$ is derived from ${R}^{2,i-1}$ by making $a$ into the uniquely best alternative and $b$ into the uniquely second best alternative of voter $k+i$. Just as for the sequence $R^{0,i}$, \ref{imp2} implies that if $f({R}^{2,i-1})=\{a\}$, then $f({R}^{2,i})=\{a\}$ because $f({R}^{2,i-1})=\{a\}=T_{k+i}({R}^{2,i})$. Since $f(R^2)=\{a\}$, we infer for the profile $R^3={R}^{2,n-k}$ that $f(R^3)=\{a\}$ by repeatedly applying this argument. Furthermore, every alternative in $A\setminus \{a,b\}$ is Pareto-dominated by $b$ in $R^3$. We use this observation to sequentially replace the current preference relations of the voters $i\in I$ with a preference relation in which $b$ is the uniquely most preferred alternative and $a$ is his uniquely least preferred alternative. Formally, this leads to another sequence of profiles $R^{3,0},\dots, R^{3,k}$ that starts at $R^3$ and one by one changes the preference relation of the voters $i\in I$ as described. For every profile $R^{3,i}$, it holds that only $a$ and $b$ can be chosen because of Pareto-optimality. Moreover, if $f(R^{3,i-1})=\{a\}$, then $f(R^{3,i})=\{a\}$ as any other subset of $\{a,b\}$ constitutes a manipulation for voter $i$. Hence, this process results in a profile $R^4=R^{3,k}$ such that $f(R^4)=\{a\}$, all voters $i\in I$ prefer $a$ uniquely the least, and all voters $j\in N\setminus I$ prefer $a$ uniquely the most. It follows now from \ref{imp1} and \ref{imp2} that $f(R)=\{a\}$ for all preference profiles $R$ in which the voters $j\in N\setminus I$ prefer $a$ uniquely the most: \ref{imp1} allows the voters $i\in I$ to deviate to any other preference relation without changing the choice set because $B_i(R^4)=f(R^4)=\{a\}$ and \ref{imp2} allows the voters $i\in N\setminus I$ to reorder the alternatives in $A\setminus \{a\}$ arbitrarily because $f(R^4)=\{a\}$ is after the deviation still their set of top-ranked alternatives.\bigskip
	
	\textbf{Step 3:} As next step, we show that the voters in $N\setminus I$ can make every alternative win uniquely if they report it as their common top choice. Thus, consider the preference profile $R^5$ in which all voters in $N\setminus I$ prefer $a$ uniquely the most, and the voters in $I$ prefer~$c$ uniquely the most, $b$ uniquely second most, and $a$ uniquely the least. It follows from the last step that $f(R^5)=\{a\}$. Next, let the voters $j\in N\setminus I$ change their preferences sequentially such that they prefer $a$ and $b$ the most. Formally, this leads to another sequence of preference profiles $R^{5,0}, \dots, R^{5,n-k}$ and \ref{imp2} implies that if $f(R^{5,i-1})\subseteq \{a,b\}$, then $f(R^{5,i})\subseteq\{a,b\}$ because $T_i(R^{5,i})=\{a,b\}$. Hence, it holds for the profile $R^6=R^{5,n-k}$ that $f(R^6)=\{b\}$: our previous argument implies that $f(R^6)\subseteq \{a,b\}$ and $b$ Pareto-dominates~$a$ in this profile. Thereafter, we replace the preference relation of every voter $i\in N\setminus I$ with a new preference in which he prefers $b$ uniquely the most. \ref{imp2} shows for the corresponding sequence of profiles $R^{6,0}, \dots, R^{6,n-k}$ that $f(R^{6,i})=\{b\}$ if $f(R^{6,i-1})=\{b\}$. Therefore, this sequence results in a new preference profile $R^7=R^{6,n-k}$ with $f(R^7)=\{b\}$. Since the voters $i\in I$ do not top-rank $b$, we can now apply the constructions in Step 2 to deduce that $b$ is uniquely chosen if all voters in $N\setminus I$ voters prefer it uniquely the most.\bigskip
	
	\textbf{Step 4:} Finally, it remains to prove that $f(R)\subseteq T_i(R)$ for all voters $i\in N\setminus I$ and preference profiles $R$ such that ${\succsim_i}={\succsim_j}$ for all $i,j\in N\setminus I$. If the voters in $N\setminus I$ only report a single alternative $x$ as their top choice, this claim follows from Step 3, which requires that $x$ is the unique winner. Hence, consider a profile $R^8$ such that ${\succsim_i^8}={\succsim_j^8}$ for $i,j\in N\setminus I$, $|T_i(R)|\geq 2$ for all $i\in N\setminus I$, and let $a$ denote one of the top-ranked alternatives of these voters. Moreover, define $R^9$ as the profile derived from $R^8$ by making $a$ into the unique best alternative of every voter $i\in N\setminus I$. Step 3 implies for $R^9$ that $f(R^9)=\{a\}$. Moreover, we can go from $R^9$ to $R^8$ by letting the voters $i\in N\setminus I$ one after another revert back to the preference relation $\succsim^8_i$. Since all these voters have the same preference relation in $R^8$ and $a\in T_i(R^8)$ for all $i\in N\setminus I$, it follows from a repeated application \ref{imp2} that $f(R^8)\subseteq T_i(R^8)$, which proves the lemma. 
\end{proof}

\Cref{lem:NU} has a number of interesting consequences. First of all, it shows that, for every non-empty set of voters $I\subseteq N$, either $I$ is nominating or $N\setminus I$ is decisive for a strategyproof and Pareto-optimal SCF. Since anonymity implies that no set $I\subseteq N$ with $|I|\leq \frac{n}{2}$ can be decisive, it follows that every set with more than half of the voters is nominating for a Pareto-optimal, strategyproof, and anonymous SCF. Furthermore, this lemma shows that, under Pareto-optimality, strategyproofness, and anonymity, indecisiveness for a single preference profile of a particularly simple type entails a large degree of indecisiveness for the entire domain of preference profiles: if an alternative is not chosen uniquely even if $n-l$ voters prefer it uniquely the most, then all groups of size $l$ are nominating. This already indicates that strategyproof SCFs are rather indecisive under mild additional assumptions. For our subsequent proofs, the inverse direction of \Cref{lem:NU} is more interesting: if a voter $i$ is not a nominator, then the set $N\setminus \{i\}$ is decisive. This means that the absence of nominators implies near unanimity for strategyproof and Pareto-optimal SCFs. Furthermore, if we additionally assume anonymity, we have near unanimity even if a single voter is not a nominator.

\begin{remark}
Remarkably, many impossibility results rule out that every voter is a nominator. For instance, \shortciteA{DuSc00a}, \shortciteA{Beno02a}, and \shortciteA{Sato08a} invoke axioms prohibiting that every voter is a nominator. Moreover, a crucial step in the computer-generated proofs by \shortciteA[Theorem~3.1]{BBEG16a} and \shortciteA[Theorem~1]{BSS19a} is to show that there is a voter who is not a nominator. \Cref{lem:NU} shows that these assumptions and observations imply the existence of a decisive group of size $n-1$, which is in conflict with strategyproofness as defined by the above authors. Intuitively, a decisive group of size $n-1$ is already too small to allow for their notions of strategyproofness.
\end{remark}

\subsection{Rank-Based SCFs}

In this section, we prove that there is no rank-based SCF that satisfies Pareto-optimality and strategyproofness. This result follows from the observation that Pareto-optimality, strategyproofness, and rank-basedness require that every voter is a nominator, but Pareto-optimality and rank-basedness do not allow for such SCFs. 

It is possible to show \Cref{thm1}---as well as \Cref{thm2}---by induction proofs where completely indifferent voters and universally bottom-ranked alternatives are used to generalize the statement to arbitrarily many voters and alternatives \shortcite<see, e.g.,>{BBEG16a,BBGH18a,BSS19a}. Instead, we prefer to give universal proofs for any number of voters and alternatives to stress the robustness of the respective constructions. As a consequence, our proofs usually hold when restricting the domain of admissible profiles by prohibiting artificial constructs such as completely indifferent voters.
Note that we often assume that all voters are indifferent between all but a few alternatives $A\setminus X$. This assumption is not required and is only used for the sake of simplicity. In fact, the preferences between alternatives in $X$ can be arbitrary and may differ from voter to voter and often even between profiles. The only restriction is that the preferences involving alternatives in $A\setminus X$ are not modified.  

\begin{restatable}{theorem}{thmRB}\label{thm1}
There is no rank-based SCF that satisfies Pareto-optimality and strategyproofness if $m\geq 4$ and $n\geq 3$, or if $m\geq 5$ and $n\geq 2$.
\end{restatable}
\begin{proof}
	Consider fixed numbers of voters $n$ and alternatives $m$ such that $m\geq 4$ and $n\geq 3$, or $m\geq 5$ and $n\ge 2$. Furthermore, suppose for contradiction that there is a rank-based SCF $f$ that satisfies strategyproofness and Pareto-optimality for the given values of~$n$ and~$m$. We derive a contradiction to this assumption by proving two claims: on the one hand, there is a voter who is not a nominator for $f$. On the other hand, the assumptions on the SCF require that every voter is a nominator. These two claims contradict each other and therefore $f$ cannot exist.
	\medskip 
	
	\textbf{Claim 1: Not every voter is a nominator for $f$.}
	
	First, we prove that not every voter is a nominator for $f$. For this, we use a case distinction and first suppose that $m\geq 4$ and $n\geq 3$. In this case, consider the following three profiles, where $X=A\setminus\{a,b,c,d\}$. \smallskip

	\begin{profile}{C{0.05\profilewidth} C{0.25\profilewidth} C{0.25\profilewidth} C{0.3\profilewidth}}
		{$R^1$:} & $1$: $\{a, b\}, X, \{c, d\}$ & $2$: $\{c, d\}, X, \{a, b\}$  & $[3\dots n]$: $a, \{b, c, d\}, X$
	\end{profile}
	
	\begin{profile}{C{0.05\profilewidth} C{0.25\profilewidth} C{0.25\profilewidth} C{0.3\profilewidth}}
		{$R^2$:} & $1$: $\{a, c\}, X, \{b, d\}$ & $2$: $\{b, d\}, X, \{a, c\}$  & $[3\dots n]$: $a, \{b, c, d\}, X$
	\end{profile}
	
	\begin{profile}{C{0.05\profilewidth} C{0.25\profilewidth} C{0.25\profilewidth} C{0.3\profilewidth}}
		{$R^3$:} & $1$: $\{a, d\}, X, \{b, c\}$ & $2$: $\{b, c\}, X, \{a, d\}$  & $[3\dots n]$: $a, \{b, c, d\}, X$
	\end{profile}
	
	It can be easily verified that $r^*(R^1)=r^*(R^2)=r^*(R^3)$ and that $a$ Pareto-dominates $b$ in $R^1$, $c$ in $R^2$, and $d$ in $R^3$. This means that $f(R^1)=f(R^2)=f(R^3)\subseteq \{a\}\cup X$ because of rank-basedness and Pareto-optimality. Consequently, voter 2 is not a nominator for $f$.
	
	Next, we focus on the case that $m\geq 5$ and $n= 2$ and consider the profiles $R^4$, $R^5$, and $R^6$, where $X=A\setminus \{a,b,c,d,e\}$. \smallskip

		\begin{profile}{C{0.1\profilewidth} C{0.35\profilewidth} C{0.35\profilewidth}}
			$R^4$: & $1$: $\{a,b\},X,e,\{c,d\}$  & $2$: $\{c,d\},X,a,\{b,e\}$
		\end{profile}

		\begin{profile}{C{0.1\profilewidth} C{0.35\profilewidth} C{0.35\profilewidth}}
			$R^5$: & $1$: $\{a,c\},X,e,\{b,d\}$  & $2$: $\{b,d\},X,a,\{c,e\}$
		\end{profile}

		\begin{profile}{C{0.1\profilewidth} C{0.35\profilewidth} C{0.35\profilewidth}}
			$R^6$: & $1$: $\{a,d\},X,e,\{b,c\}$  & $2$: $\{b,c\},X,a,\{d,e\}$
		\end{profile}
		
		Analogous to the last case, it can be verified that $r^*(R^4)=r^*(R^5)=r^*(R^6)$, and that $a$ Pareto-dominates $b$ in $R^4$, $c$ in $R^5$, and $d$ in $R^6$. Consequently, rank-basedness and Pareto-optimality imply that $f(R^4)=f(R^5)=f(R^6)\subseteq \{a,e\}\cup X$, which proves that voter 2 is not a nominator for $f$. 
	\medskip
	
	\textbf{Claim 2: Every voter is a nominator for $f$.}
	
	Assume for contradiction that a voter is not a nominator for $f$ and let $X=A\setminus\{a,b,c,d\}$. Consequently, it follows from rank-basedness that no voter is a  nominator and therefore, \Cref{lem:NU} shows that all sets of $n-1$ voters are decisive. Furthermore, we want to point out that the subsequent construction works for all $n\geq 2$ and $m\geq 4$, which means that no case distinction is required. Our proof focuses on the profiles $R^{1,k}$ and $R^{2,k}$ for $k\in \{1,\dots, n\}$ shown below. \smallskip

	\begin{profile}{C{0.1\profilewidth} C{0.2\profilewidth} C{0.30\profilewidth} C{0.30\profilewidth}}
		{$R^{1,k}$:} & $1$: $\{c, d\}, X,b, a$  & $[2\dots k]$: $\{a, b\}, X,c, d$ & $[k+1\dots n]$: $a, X, b, c, d$
	\end{profile}
	
	\begin{profile}{C{0.1\profilewidth} C{0.2\profilewidth} C{0.30\profilewidth} C{0.30\profilewidth}}
		{$R^{2,k}$:} & $1$: $\{b,d\}, X,c, a$  & $[2\dots k]$: $\{a, b\}, X,c, d$  & $[k+1\dots n]$: $a, X, b, c, d$
	\end{profile}
	
	We prove by induction on $k\in\{1,\dots, n\}$ that $f(R^{1,k}) = f(R^{2,k}) = \{a\}$. The case $k = n$ yields a contradiction to Pareto-optimality as $a$ is Pareto-dominated by $b$ in $R^{1,n}$.
	
	The base case $k= 1$ follows because $n-1$ voters prefer $a$ uniquely the most in both $R^{1,1}$ and $R^{2,1}$. Therefore, our previous observation that every set of $n-1$ voters is decisive shows that $f(R^{1,1})=f(R^{2,1})=\{a\}$. 
	
	Assume now that the induction hypothesis is true for some fixed $k\in \{1,\dots,n-1\}$, i.e., $f(R^{1,k})=f(R^{2,k})=\{a\}$. 
	By induction and \ref{imp2}, $f(R^{1,k+1})\subseteq \{a,b\}$ since otherwise voter $k+1$ can manipulate by switching back to $R^{1,k}$. Next, we derive the profile $R^{3,k}$ shown below from $R^{2,k}$ by assigning voter $k+1$ the preference relation $\{a, c\}, X, b, d$.\smallskip
	
	\begin{profile}{C{0.1\profilewidth} C{0.35\profilewidth} C{0.35\profilewidth}}
		{$R^{3,k}$:} & $1:\,\{b, d\}, X, c, a$  & $[2\dots k]:\,\{a, b\}, X,c, d$  \\ & $k+1:\,\{a, c\}, X,b,d$ & $[k+2\dots n]:\,a, X, b, c, d$
	\end{profile}
	
	The induction hypothesis entails that $f(R^{2,k})=\{a\}$ and therefore, \ref{imp2} implies that $f(R^{3,k})\subseteq \{a,c\}$ because $f(R^{2,k})\subseteq T_{k+1}(R^{3,k})$. Next, we apply rank-basedness to conclude that $f(R^{1,k+1})=\{a\}$ as $r^*(R^{1,k+1})=r^*(R^{3,k})$. Finally, $R^{2,k+1}$ evolves from $R^{1,k+1}$ by having voter $1$ change his preferences. Since $B_1(R^{1,k+1})=\{a\}=f(R^{1,k+1})$, \ref{imp1} implies that $f(R^{2,k+1}) = \{a\}$ as any other outcome benefits voter $1$. This proves the induction step and therefore also the theorem.
\end{proof} 

\begin{remark}\label{rem:rankb}
The axioms used in \Cref{thm1} are independent: the Pareto rule satisfies all axioms except rank-basedness, the trivial SCF which always returns all alternatives only violates Pareto-optimality, and Borda's rule only violates strategyproofness.\footnote{We define Borda's rule as the SCF that chooses all alternatives $x$ that minimize $\sum_{i\in N} \bar{r}(\succ_i,x)+\frac{1}{2}\bar{r}(\sim_i,x)$. This definition of Borda's rule for weak preferences is equivalent to the one suggested by \shortciteA{Youn74a}.} Furthermore, the Pareto rule is rank-based if $m\leq 3$, and if $m= 4$ and $n\leq2$ (cf. \Cref{prop:POrb} in the appendix), which entails that the bounds on $m$ and $n$ are tight.
\end{remark}

\begin{remark}\label{rem:rankbvariant}
\Cref{thm1} is only an impossibility because of the lack of compatibility of rank-basedness and Pareto-optimality in Claim 1, independently of strategyproofness. By contrast, the main consequence of strategyproofness is indecisiveness as captured in Claim~2. Indeed, \Cref{thm1} breaks down once we weaken Pareto-optimality to weak Pareto-optimality (which only excludes alternatives for which another alternative is strictly preferred by every voter) as then the omninomination rule satisfies all required axioms \shortcite[Remark 6]{BSS19a}. By contrast, Claim 2 is rather robust since a number of variations are true: for instance, it is easy to adapt the proof of this claim to show that no neutral, strategyproof, and rank-based SCF satisfies near unanimity if $m\geq 4$ and $n\geq 3$, or $m\geq 5$ and $n=2$.\footnote{An SCF is neutral if $f(\pi(R))=\pi(f(R))$ for all permutations $\pi\colon A\to A$ and preference profiles $R$, $R'$.} Furthermore, the proof also reveals that a rank-based SCF that satisfies neutrality and strategyproofness can only choose a unique winner if this alternative is never uniquely bottom-ranked by a voter. 
\end{remark}

\begin{remark}
	\Cref{thm1} also holds under weaker versions of rank-basedness. First, our proof uses rank-basedness only in very specific situations, namely when two voters rename exactly two alternatives. Moreover, the only real restriction on the rank function $r$ is independence of the naming of other alternatives, i.e., $r(\succsim_i,a)=r(\succsim_i',a)$ for all preference relations~$\succsim_i$,~$\succsim_i'$ that only differ in the naming of alternatives in $A\setminus \{a\}$. Hence, we may also define rank-basedness based on a rank function other than the rank tuple and the result still holds. 
\end{remark}

\begin{remark}\label{rem:rankbstrict}
\Cref{thm1} does not hold when preferences are strict. For instance, the omninomination rule satisfies all required axioms for arbitrary numbers of voters and alternatives for strict preferences. It can even be shown that Claim 2 of the proof no longer holds for strict preferences as the following SCF is rank-based, Pareto-optimal, and strategyproof: if an alternative is top-ranked by every voter, this alternative is the unique winner; otherwise, return the alternatives which are top-ranked by the most and second most voters (in case of a tie return all alternatives with the second highest plurality score). However, no voter is a nominator for this rule. A proof of these claims and a formal definition of this SCF can be found in \Cref{prop:2PL} in the appendix. 
\end{remark}

\subsection{Support-Based SCFs}
It is not possible to replace rank-basedness with support-basedness in \Cref{thm1} since the Pareto rule is strategyproof, Pareto-optimal, and support-based. Note that the Pareto rule always chooses one of the most preferred alternatives of every voter. Consequently, Claim 1 in the proof of \Cref{thm1} cannot be true in general for support-based SCFs. Nevertheless, we show next that an analogue statement to Claim 2 remains true for such SCFs, i.e., every voter is a nominator for every support-based SCF that satisfies Pareto-optimality and strategyproofness. 

\begin{restatable}{theorem}{thmSB}\label{thm2}
In every support-based SCF that satisfies Pareto-optimality and strategyproofness, every voter is a nominator if $m\geq 3$.
\end{restatable} 
\begin{proof}
	Let $f$ be a support-based SCF satisfying Pareto-optimality and strategyproofness for fixed numbers of voters $n\geq 1$ and alternatives $m\geq 3$. For $n = 1$, the theorem follows immediately from Pareto-optimality as only the most preferred alternatives of the single voter are Pareto-optimal. Moreover, \Cref{lem:NU} proves the theorem for $n=2$. Indeed, if a voter is not a nominator, support-basedness shows that no voter is a nominator. Hence, \Cref{lem:NU} shows that every voter is a dictator, which means that $f(R)=\{a\}$ and $f(R)=\{b\}$ are simultaneously true if voter 1 prefers $a$ uniquely the most and voter 2 prefers $b$ uniquely the most. This is a contradiction and proves the theorem if $n=2$.
	
	Therefore, we focus on the case that $n\geq 3$ and assume for contradiction that a voter is not a nominator for $f$. We derive from this assumption by an induction on $k\in \{1,\dots, n-1\}$ that every set of $n-k$ voters is decisive. This results in a contradiction when $k\geq n/2$ because then, two alternatives can be simultaneously top-ranked by $n-k\leq n/2$ voters, and both of them must be the unique winner.
	
	The induction basis $k=1$ follows directly from \Cref{lem:NU}: support-basedness implies that if a single voter is not a nominator for $f$, no voter is a nominator for $f$ as we can just rename the voters. Hence, every set of size $n-1$ is decisive. Next, we assume that our claim holds for a fixed $k\in \{1,\dots, n-2\}$ and prove that also every set of $n-(k+1)$ voters is decisive. For this, we focus only on three alternatives $a,b,c$ and on a certain partition of the voters. This is possible as the induction hypothesis allows us to exchange the roles of the alternatives without affecting the proof and support-basedness allows us to reorder the voters. Thus, consider the profile $R^{k,1}$, in which $X=A\setminus \{a,b,c\}$, and note that $f(R^{k,1})=\{a\}$ because of \Cref{lem:NU}.\smallskip
	
	\begin{profile}{C{0.06\profilewidth}C{0.27\profilewidth} C{0.27\profilewidth}C{0.27\profilewidth}}
		{$R^{k,1}$:} & $[1\dots k]$: $a,X,c,b$  & $k+1$: $c,X,b,a$ & $[k+2\dots n]$: $a, b, X, c$ 
	\end{profile}

	Next, we aim to reverse the preferences of the voters $\seti{k+2}{n}$ over $a$ and $b$. This is achieved by the repeated application of the following steps explained for voter $k+2$. First, voter $k+2$ changes his preference to $\{a,b\}, c, X$ to derive the profile $R^{k,2}$. Since a subset of $\{a,b\}$ was chosen before this step, \ref{imp2} implies that $f(R^{k,2})\subseteq \{a,b\}$. Next, we use support-basedness to exchange the preferences of voter $k+1$ and $k+2$ over $a$ and $b$. This leads to the profile $R^{k,3}$ and support-basedness implies that $f(R^{k,3})=f(R^{k,2})\subseteq \{a,b\}$. Since $\{a,b\}= B_{k+1}(R^{k,3})$, \ref{imp1} implies that this voter cannot make another alternative win by manipulating. Thus, he can switch back to his original preference to derive $R^{k,4}$ and the fact that $f(R^{k,4})\subseteq \{a,b\}$. \smallskip
	
	\begin{profile}{C{0.06\profilewidth} C{0.2\profilewidth} C{0.2\profilewidth} C{0.2\profilewidth} C{0.24\profilewidth}}
		$R^{k,2}$: &$[1\dots k]$: $a,X,c,b$ & $k+1$: $c,X,b,a$ & $k+2$: $\{a,b\},X,c$ & $[k+3\dots n]$: $a, b, X,c$
	\end{profile}

	\begin{profile}{C{0.06\profilewidth} C{0.2\profilewidth} C{0.2\profilewidth} C{0.2\profilewidth} C{0.24\profilewidth}}
		$R^{k,3}$: &$[1\dots k]$: $a,X,c,b$ & $k+1$: $c,X,\{a,b\}$ & $k+2$: $b,a,X,c$ & $[k+3\dots n]$: $a, b, X,c$ 
	\end{profile}

	\begin{profile}{C{0.06\profilewidth} C{0.2\profilewidth} C{0.2\profilewidth} C{0.2\profilewidth} C{0.24\profilewidth}}
		$R^{k,4}$: &$[1\dots k]$: $a,X,c,b$ & $k+1$: $c,X,b,a$ & $k+2$: $b,a,X,c$ & $[k+3\dots n]$: $a, b, X,c$ \\
	\end{profile}
	
	It is easy to see that we can repeat these steps for every voter $\seti{k+3}{n}$. This process results in the profile $R^{k,5}$ and shows that $f(R^{k,5})\subseteq\{a,b\}$. Moreover, consider the profile $R^{k,6}$ derived from $R^{k,5}$ by letting voter $k+1$ make $b$ his best alternative. Because $n-k$ voters prefer $b$ uniquely the most in $R^{k,6}$, the induction hypothesis entails that $f(R^{k,6})=\{b\}$. This means that voter $k+1$ can manipulate by switching from $R^{k,5}$ to $R^{k,6}$ if $f(R^{k,5})=\{a\}$ or $f(R^{k,5})=\{a,b\}$. Consequently, $f(R^{k,5})=\{b\}$ is the only valid choice set for $R^{k,5}$.\smallskip
	
	\begin{profile}{C{0.06\profilewidth}C{0.27\profilewidth} C{0.27\profilewidth}C{0.27\profilewidth}}
	     $R^{k,5}$: &$[1\dots k]$: $a,X,c,b$ & $k+1$: $c,X,b,a$ & $[k+2\dots n]$: $b, a, X,c$ 
	\end{profile}

	\begin{profile}{C{0.06\profilewidth}C{0.27\profilewidth} C{0.27\profilewidth}C{0.27\profilewidth}}
	     $R^{k,6}$: & $[1\dots k]$: $a,X,c,b$ & $k+1$: $b,X,a,c$ & $[k+2\dots n]$: $b, a, X,c$ 
	\end{profile}

	So far, we have found a profile in which $b$ is uniquely chosen when the voters $\seti{k+2}{n}$ prefer it uniquely the most. Next, we show that this set of voters is therefore decisive. Hence, consider the profile $R^{k,7}$ which is derived from $R^{k,5}$ by letting the voters $\seti{1}{k}$ subsequently change their preference to $c,X,a,b$. Since $f(R^{k,5})=\{b\}$ and $b$ is the worst alternative for these voters, \ref{imp1} implies that $f(R^{k,7})=\{b\}$.
	As last step, we change the preferences of voter $k+1$ such that $b$ is his least preferred alternative. For this, we first let all voters $\seti{k+2}{n}$ subsequently change their preference to $b, X, c, a$. This modification results in the profile $R^{k,8}$ and \ref{imp2} implies that $f(R^{k,8})=\{b\}$. Moreover, observe that alternative $a$ is Pareto-dominated by $c$ in $R^{k,8}$. Therefore, voter $k+1$ can now swap $a$ and $b$ to derive the profile $R^{k,9}$ and Pareto-optimality implies that $a\not\in f(R^{k,9})$. Then, strategyproofness implies that $f(R^{k,9})=\{b\}$ as any other subset of $A\setminus \{a\}$ is a manipulation for voter $k+1$.\smallskip
	
	\begin{profile}{C{0.06\profilewidth}C{0.27\profilewidth} C{0.27\profilewidth}C{0.27\profilewidth}}
	     $R^{k,7}:$ &$[1\dots k]$: $c,X,a,b$ & $k+1$: $c,X,b,a$ & $[k+2\dots n]$: $b, a, X,c$ 
	\end{profile}
	
	\begin{profile}{C{0.06\profilewidth}C{0.27\profilewidth} C{0.27\profilewidth}C{0.27\profilewidth}}
	    $R^{k,8}:$ &$[1\dots k]$: $c,X,a,b$ & $k+1$: $c,X,b,a$ & $[k+2\dots n]$: $b, X, c, a$
	\end{profile}

	\begin{profile}{C{0.06\profilewidth}C{0.27\profilewidth} C{0.27\profilewidth}C{0.27\profilewidth}}
	     $R^{k,9}:$ &$[1\dots k]$: $c,X,a,b$ & $k+1$: $c,X,a,b$ & $[k+2\dots n]$: $b, X, c, a$ 
	\end{profile}
	
	Finally, observe that the voters $\seti{1}{k+1}$ can change their preferences in $R^{k,9}$ arbitrarily without affecting the choice set because of \ref{imp1}, and the voters $\seti{k+2}{n}$ can reorder all alternatives in $A\setminus \{b\}$ without affecting the choice set because of \ref{imp2}. Thus, $b$ is always the unique winner if all voters $\seti{k+2}{n}$ prefer $b$ uniquely the most. Moreover, interchanging the roles of alternatives in this proof shows that every alternative is chosen uniquely if it is uniquely top-ranked by all voters in $\set{k+2}{n}$. 
	
	Next, we show that this set of voters is decisive and consider an arbitrary profile $R$ such that ${\succsim_i}={\succsim_j}$ for all $i,j\in{\set{k+2}{n}}$. If these voters only report a single alternative $x$ as top choice, $f(R)=\{x\}$ follows from our previous analysis. Otherwise, we start at a profile $R'$ in which the voters $\seti{k+2}{n}$ uniquely top-rank an alternative $x\in T_i(R)$. Once again, our previous analysis shows that $f(R')=\{x\}$ and if we let the voters $\seti{k+2}{n}$ sequentially deviate to their preference relation in $R$, \ref{imp2} shows that $f(R)\subseteq T_i(R)$ for $\seti{k+2}{n}$. Hence, this set is indeed decisive. Since support-basedness allows us to reorder the voters to derive that every set of $n-(k+1)$ voters is decisive, the induction step is proven. As a consequence, every voter is a nominator for a support-based SCF that satisfies strategyproofness and Pareto-optimality.
\end{proof}

\Cref{thm2} shows that every support-based SCF that satisfies Pareto-optimality and strategyproofness chooses one of the most preferred alternatives of every voter. Since the Pareto rule indeed satisfies all these criteria, this is no impossibility but demonstrates a high degree of indecisiveness. However, we can turn this result into an impossibility by strengthening support-basedness. For instance, if we require pairwiseness instead of support-basedness, \Cref{thm2} turns into an impossibility since pairwiseness and Pareto-optimality rule out that every voter is a nominator. For seeing this, consider the following two profiles $R^1$ and $R^2$.\smallskip 
 
 \begin{profile}{C{0.1\profilewidth} C{0.25\profilewidth} C{0.25\profilewidth} C{0.25\profilewidth}}
 	$R^1$:	& $1$: $\{a,b\},X$ &  $2$: $\{a,b\},X$ & $[3\dots n]$: $a,b,X$ \\
 	$R^2$: & $1$: $b,a,X$ &  $2$: $a,b,X$ & $[3\dots n]$: $a,b,X$ \\
 \end{profile}
 
It can be verified that $f(R^2)=f(R^1)=\{a\}$ for every pairwise and Pareto-optimal SCF $f$, which shows that voter $1$ is not a nominator. Hence, \Cref{thm2} implies that there is no pairwise, strategyproof, and Pareto-optimal SCF.\footnote{This impossibility was first observed by \shortciteA[Theorem~2]{BSS19a}.} As a consequence of this result, it follows also that no majoritarian SCF can satisfy Pareto-optimality and strategyproofness. As the next corollary shows, this impossibility is even true if we weaken Pareto-optimality to non-imposition.

\begin{corollary}\label{thm:C1}
	There is no majoritarian SCF that satisfies non-imposition and strategyproofness if $m\geq 3$ and $n\geq 3$.
\end{corollary}
\begin{proof}
	Assume for contradiction that there is a majoritarian SCF $f$ that satisfies non-imposition and strategyproofness for $m\geq 3$ and $n\geq 3$. As a first step, we show that $f$ satisfies Condorcet-consistency. Hence, choose an arbitrary alternative $a$ and consider a profile $R$ such that $f(R)=\{a\}$; such a profile exists by non-imposition. Next, we let the voters $i\in N$ one after another report $a$ as their favorite alternative. For each step, \ref{imp2} shows that the choice set does not change, and thus, this process results in a profile $R'$ with $f(R')=\{a\}$ and $T_i(R')=a$ for all voters $i\in N$. Furthermore, we infer from \ref{imp2} also that all voters can reorder all alternatives in $A\setminus \{a\}$ in $R'$ without affecting the choice set. Since $f$ is majoritarian, this means that $a$ is the unique winner for all profiles in which $a$ is the Condorcet winner, i.e., $f$ is Condorcet-consistent. 
	
	As a consequence of this observation, every group $I$ of at least $\lceil \frac{n+1}{2}\rceil$ voters can enforce that $f$ chooses an alternative $x$ uniquely if all voters in $I$ report $x$ as their unique top choice. Moreover, an analogous argument as in Step 4 of \Cref{lem:NU} shows that every such group $I$ is decisive for $f$. Note that this statement is equivalent to the induction hypothesis in the proof of \Cref{thm2} when $k=\lfloor \frac{n-1}{2}\rfloor$. Since the first steps of this proof do not require Pareto-optimality, we derive that $f(R^1)=\{b\}$ because the profile $R^1$ corresponds to the profile $R^{k, 5}$ in the proof of \Cref{thm2}, where $k=\lfloor \frac{n-1}{2}\rfloor$.\smallskip
	
	\begin{profile}{C{0.1\profilewidth} C{0.25\profilewidth} C{0.25\profilewidth} C{0.25\profilewidth}}
		$R^{1}$: &$[1\dots k]$: $a,X,c,b$ & $k+1$: $c,X,b,a$ & $[k+2\dots n]$: $b, a, X,c$ 
	\end{profile}
	
	As next step, we let the voters $\seti{1}{k}$ deviate one after another by reporting $c,b,X,a$. Since $b$ is the least preferred alternative of these voters in $R^1$ and $f(R^1)=\{b\}$, \ref{imp1} requires that the choice set does not change during these steps. Hence, it holds for the resulting profile $R^2$ that $f(R^2)=\{b\}$. Furthermore, one after another, we let the voters $\seti{k+2}{n}$ make $c$ into their second best alternative. This results in the profile $R^3$ and \ref{imp2} implies that $f(R^3)=\{b\}$.\smallskip
	
	\begin{profile}{C{0.1\profilewidth} C{0.25\profilewidth} C{0.25\profilewidth} C{0.25\profilewidth}}
		$R^{2}$: &$[1\dots k]$: $c,b,X,a$ & $k+1$: $c,X,b,a$ & $[k+2\dots n]$: $b, a, X,c$ \\
		$R^{3}$: &$[1\dots k]$: $c,b,X,a$ & $k+1$: $c,X,b,a$ & $[k+2\dots n]$: $b, c,a,X$ 
	\end{profile}

	Finally, note that $f(R^3)=\{b\}$ is a contradiction. If $n$ is odd, then $c$ is the Condorcet winner in $f(R^3)$ and thus, Condorcet-consistency requires that $f(R^3)=\{c\}$. On the other hand, if $n$ is even, we can exchange the roles of $b$ and $c$ in the derivation of $R^3$ to derive that $f(R^3)=\{c\}$ must also be true. This is possible as $c\sim_{R^3} b$ and $x\succ_{R^3} y$ for all $x\in \{b,c\}$, $y\in A\setminus \{b,c\}$. Hence, if we exchange the role of $b$ and $c$ in the derivation of the profile $R^3$, we end up with another profile $R^{3'}$ with the same majority relation, but our proof shows that $f(R^{3'})=\{c\}$. This is in conflict with $f$ being majoritarian and thus, no majoritarian SCF satisfies both strategyproofness and non-imposition if $m\geq 3$ and $n\geq 3$. 
\end{proof}

\begin{remark}\label{remark}
All axioms used in \Cref{thm2} are required as the following SCFs show.
Every constant SCF satisfies support-basedness and strategyproofness, and violates Pareto-optimality and that every voter is a nominator. 
The SCF that always chooses a unique Pareto-optimal alternative according to a fixed tie-breaking order satisfies Pareto-optimality and support-basedness but violates strategyproofness and that every voter is a nominator. An SCF that satisfies Pareto-optimality and strategyproofness but violates support-basedness and that every voter is a nominator can be found as follows. We define a transitive dominance relation by slightly strengthening Pareto-dominance. Therefore, we additionally allow that an alternative~$a$ that is among the most preferred alternatives of $n-1$ voters can dominate another alternative $b$, even if a single voter strictly prefers $b$ to $a$. More formally, we say that an alternative~$a$ \emph{dominates} alternative $b$ if $a$ Pareto-dominates $b$ or $n-1$ voters prefer $a$ the most while $s_{ab}(R)\geq 2$ and $s_{ba}(R)\leq 1$. It should be stressed that it is not required that $a$ is uniquely top-ranked by $n-1$ voters, but only that it is among their best alternatives. The SCF $f^*$ that chooses all maximal elements with respect to this dominance relation satisfies all required properties (see \Cref{prop:f*} in the appendix for more details).
Also the bound on $m$ is tight as the majority rule
satisfies all axioms if $m=2$ but no voter is a nominator for this SCF.
\end{remark}

\begin{remark}\label{rem:SD}
\Cref{thm2} implies an impossibility for $m\geq 4$ and $n\geq 4$ if we strengthen Pareto-optimality to \sd-efficiency (also known as ordinal efficiency, see \shortciteR{BoMo01a}). This result follows by considering the preference profiles $R^1$ and $R^2$ shown below. In this profile, $X=A\setminus \{a,b,c,d\}$ and all voters $\seti{5}{n}$ are assumed to be indifferent between all alternatives.\smallskip

\begin{profile}{C{0.06\profilewidth} C{0.2\profilewidth} C{0.2\profilewidth} C{0.2\profilewidth} C{0.2\profilewidth}}
	$R^1$: & $1$: $a,c,b,d,X$ &  $2$: $a,d,b,c,X$ & $3$: $b,c,a,d,X$ & $4$: $b,d,a,c,X$ \\
	$R^2$: & $1$: $a,b,c,d,X$ &  $2$: $a,b,d,c,X$ & $3$: $c,b,a,d,X$ & $4$: $d,b,a,c,X$ \\
\end{profile}

Next, consider a support-based and \sd-efficient SCF $f$. \sd-efficiency implies that $f(R^1)\cap X=\emptyset$ and that either $c\not\in f(R^1)$ or $d\not\in f(R^1)$. Moreover, support-basedness implies that $f(R^1)=f(R^2)$, which means that voter 3 or 4 is not a nominator for $f$. Thus, \Cref{thm2} implies that $f$ is not strategyproof, which shows the incompatibility of strategyproofness, support-basedness, and \sd-efficiency. Note that there is also no rank-based, strategyproof, and \sd-efficient SCF if $m\geq 4$ and $n\geq 3$ because of \Cref{thm1}. This leads to the important and challenging open question whether there is an anonymous, SD-efficient, and strategyproof SCF. 
\end{remark}

\begin{remark}
Just as in the proof of \Cref{thm1}, we make only very restricted use of support-basedness in the proof of \Cref{thm2}. It suffices if two voters are allowed to exchange their preferences over two alternatives. This technical restriction is significantly weaker than support-basedness, which allows any number of voters to change their preferences. 
\end{remark}

\begin{remark}
If preferences are required to be strict, \Cref{thm2} does not hold. Several SCFs including the uncovered set, the minimal covering set, and the essential set are strategyproof, Pareto-optimal and support-based, but no voter is a nominator for them \shortcite<for more details, see, e.g.,>[Chapter 3]{BCE+14a}. Remarkably, all these SCFs are majoritarian and thus affected by the stronger impossibility in \Cref{thm:C1} if we allow for ties in voters' preferences. 
\end{remark}

\begin{remark}
\Cref{thm1,thm2} raise the question whether all voters must be nominators for every anonymous, Pareto-optimal, and strategyproof social choice function. This is not the case because the SCF $f^*$, as defined in \Cref{remark}, satisfies near unanimity and therefore represents a counterexample. This leads to the intriguing question on the minimal value $l$ such that all groups of $l$ voters are nominating for every anonymous SCF that satisfies Pareto-optimality and strategyproofness. An upper bound for this problem is provided by \Cref{lem:NU}, which shows that $l\leq \lceil\frac{n+1}{2}\rceil$.
\end{remark}

\subsection{Non-Imposing SCFs}

Finally, we consider the class of non-imposing SCFs. Recall that an SCF is non-imposing if every alternative is returned as the unique winner in some profile. Among the SCFs typically studied in social choice theory, there are only very few that fail to be non-imposing, e.g., SCFs that never return certain alternatives (such as constant SCFs) or SCFs that never return singletons.
We will show a rather strong consequence of strategyproofness for non-imposing SCFs: every such function has to return the Condorcet loser in at least one preference profile and thus violates the Condorcet loser property. 
In the presence of neutrality, non-imposition can be seen as a decisiveness requirement. More precisely, if a neutral SCF fails non-imposition, it can never return a singleton choice set, which means that the SCF has to choose unreasonably large choice sets for many preference profiles. For instance, even if all voters agree on a best alternative, it cannot be chosen uniquely. Accordingly, the theorem identifies a tradeoff between decisiveness and the undesirable property of selecting Condorcet losers.

Similar to the proofs of \Cref{thm1} and \Cref{thm2}, we start with a general lemma on strategyproof SCFs. This time, we investigate the relationship between vetoing and decisive groups of voters and show that these notions coincide for strategyproof and non-imposing SCFs.

\begin{restatable}{lemma}{lemCL}\label{lem:CL}
	Let $f$ denote a strategyproof and non-imposing SCF for $m\geq 2$ alternatives. A group of voters $I$ with $\emptyset\subsetneq I\subseteq N$ is vetoing for $f$ if and only if it is decisive for $f$.  
\end{restatable}
\begin{proof}
	Let $f$ denote a strategyproof and non-imposing SCF and consider a non-empty set of voters $I \subseteq N$. First, observe that if $I$ is decisive for $f$, then it is also vetoing: if all voters in $I$ agree on the same preference relation in a profile $R$ and report an alternative $x$ as their unique last choice, then decisiveness requires that $f(R)\subseteq T_i(R)$ for $i\in I$. This means that $x\not\in f(R)$ because $x\not\in T_i(R)$, which shows that the group $I$ is also vetoing for $f$. 
	
	Next, suppose that $I$ is vetoing for $f$. If $m=2$, every vetoing group of voters is decisive because such a group can determine the choice set by vetoing out an alternative. Thus, we focus in the sequel on the case $m\geq 3$ and show that the set $I$ is decisive for $f$. We derive this claim in three steps: first, we show that $f(R)=\{x\}$ if all voters report $x$ as their unique choice. Next, we prove that $f(R)=\{x\}$ is also true if only the voters $i\in I$ report $x$ as their best choice. Finally, we infer from this insight that the set $I$ is decisive for $f$.\bigskip
	
	\textbf{Step 1:}
	As a first step, we show that $f(R)=\{x\}$ for all alternatives $x\in A$ and preference profiles $R$ such that all voters report $x$ as unique top choice. For proving this claim, consider an arbitrary alternative $a$ and a preference profile $R^1$ such that all voters uniquely top-rank $a$ in $R^1$. Since $a$ and $R^1$ are chosen arbitrarily, the claim follows by showing that $f(R^1)=\{a\}$. Note for this that there is a profile $R^2$ such that $f(R^2)=\{a\}$ because $f$ is non-imposing. As next step, we derive $R^1$ from $R^2$ by sequentially replacing the preference relation of all voters $i\in N$ with $\succsim_i^1$. In more detail, consider the sequence $R^{2,0},\dots, R^{2,n}$ such that $R^{2,0}=R^2$, $R^{2,n}=R^1$, and $R^{2,i}$ evolves out of $R^{2,i-1}$ by replacing $\succsim_i^{2}$ with $\succsim_i^1$. Now, if $f(R^{2,i-1})=\{a\}$ for some $i\in \{1,\dots, n\}$, \ref{imp2} implies that $f(R^{2,i})=\{a\}$ because $T_i(R^{2,i})=\{a\}$. Since $f(R^2)=\{a\}$, we can repeatedly use this argument to derive that $f(R^1)=\{a\}$, which proves this step.\bigskip
	
	\textbf{Step 2:}
	Our next goal is to show that $f(R)=\{x\}$ for all alternatives $x\in A$ and profiles $R$ such that all voters in $I$ report $x$ as their unique top choice. Hence, we suppose that $I\subsetneq N$; otherwise, we can directly proceed with Step 3. Just as in the last step, consider an arbitrary alternative $a$. Subsequently, we will show that $f(R^3)=\{a\}$ for the following profile $R^3$.\smallskip
	
	\begin{profile}{C{0.1\profilewidth} C{0.35\profilewidth} C{0.4\profilewidth}}
		{$R^3$:}& $I$: $a,A\setminus\{a\}$ & $N\setminus I$: $A\setminus \{a\}, a$  
	\end{profile}
	
	This insight suffices to prove Step 2 since \ref{imp2} allows the voters in $I$ to reorder all alternatives in $A\setminus \{a\}$ arbitrarily without affecting the choice set, and \ref{imp1} allows the voters in $N\setminus I$ to reorder all alternatives without affecting the choice set.
	
	For proving the claim on $R^3$, let $A=\{a, a_1, \dots, a_{m-1}\}$ denote the set of alternatives and, given $l\in \{1, \dots, m-1\}$ and a set of alternatives $X\subseteq A\setminus \{a, a_l\}$, define the profiles $R^{l,X}$ as shown below.\smallskip
	
	\begin{profile}{C{0.1\profilewidth} C{0.35\profilewidth} C{0.4\profilewidth}}
		{$R^{l,X}$:}& $I$: $a,A\setminus\{a\}$ & $N\setminus I$: $a_l, \{a\}\cup X, A\setminus (X\cup\{a, a_l\})$  
	\end{profile}
	
	\noindent In particular, the preference profiles $R^{l,\emptyset}$ and $R^{l, A\setminus\{a,a_l\}}$ are defined as follows.\smallskip
	
	\begin{profile}{C{0.12\profilewidth} C{0.35\profilewidth} C{0.4\profilewidth}}
		{$R^{l,\emptyset}$:}& $I$: $a,A\setminus\{a\}$ & $N\setminus I$: $a_l, a, A\setminus \{a, a_l\}$\\
		{$R^{l,A\setminus \{a, a_l\}}$:}& $I$: $a,A\setminus\{a\}$ & $N\setminus I$: $a_l, A\setminus \{a_l\}$
	\end{profile}
		
	Note that strategyproofness implies that, if $f(R^{l, A\setminus \{a, a_l\}})=\{a\}$ for all $l\in \{1,\dots, m-1\}$, then $f(R^3)=\{a\}$. The reason for this is the following: starting at an arbitrary profile $R^{l, A\setminus \{a, a_l\}}$, we can derive the profile $R^3$ by letting the voters $i\in N\setminus I$ change their preference relation one after another to $\succsim_i^3$. For each step, \ref{imp1} implies that a subset of $A\setminus \{a_l\}$ needs to be chosen as otherwise, the deviation is a manipulation. Hence, we derive for every $l\in \{1,\dots, m-1\}$ that $a_l\not\in f(R^3)$, which means that $f(R^3)=\{a\}$. 
	
	Subsequently, we will prove by induction on $z=|X|$ that $f(R^{l,X})=\{a\}$ for all $l\in \{1,\dots, m-1\}$ and $X\subseteq A\setminus \{a, a_l\}$. Because of our previous insights, this completes the proof of Step~2. Two observations are central for the subsequent argument: firstly, our argument is closed under renaming alternatives in $A\setminus \{a\}$. This means that if we can show that $f(R^{l,X})=\{a\}$ for some $l\in \{1, \dots, m-1\}$ and $X\subseteq A\setminus \{a, a_l\}$ with $|X|=z$, this result holds for all $l'\in \{1, \dots, m-1\}$ and subsets of $A\setminus \{a, a_{l'}\}$ with size $z$. Secondly, we can ensure that any alternative $a_l\in \{a_1, \dots, a_{m-1}\}$ is unchosen by letting the voters $i\in I$ report it as their unique bottom choice. Furthermore, this step does not change the choice set if $a$ is the unique winner because of \ref{imp2}.
	
	First, we prove the base case $z=0$, i.e., we show that $f(R^{l,\emptyset})=\{a\}$ for all $l\in \{1,\dots, m-1\}$. Consider for this the following profiles and note that we display again $R^{l,\emptyset}$ so that all relevant profiles are shown.\smallskip
	
	\begin{profile}{C{0.1\profilewidth} C{0.35\profilewidth} C{0.4\profilewidth}}
		{$\hat{R}^{l,\emptyset}$:}& $I$: $a,A\setminus\{a,a_l\},a_l$ & $N\setminus I$: $a, a_l, A\setminus \{a, a_l\}$\\
		{$\tilde{R}^{l,\emptyset}$:}& $I$: $a,A\setminus\{a,a_l\},a_l$ & $N\setminus I$: $a_l, a, A\setminus \{a, a_l\}$\\
		${R}^{l,\emptyset}$:& $I$: $a,A\setminus\{a\}$ & $N\setminus I$: $a_l, a, A\setminus \{a, a_l\}$
	\end{profile}
	
	Recall that, by Step~1, $f(R)=\{a\}$ if all voters $i\in N$ uniquely top-rank~$a$. Consequently, it holds that $f(\hat{R}^{l,\emptyset})=\{a\}$. Next, observe that $a_l$ is the uniquely least preferred alternative of the voters in $I$. Now, let the voters in $N\setminus I$ swap $a$ and $a_l$ one after another to derive $\tilde{R}^{l,\emptyset}$. For each step, $a_l$ cannot be chosen because the voters in $I$ veto it. Hence, if $a$ is chosen before the deviation of a voter $i\in N\setminus I$, it follows that $a$ needs to be chosen afterwards; otherwise, a set $X$ with $\{a\}\subsetneq X\subseteq A\setminus \{a_l\}$ is chosen and thus, voter $i$ can manipulate by reverting this modification. Since $f(\hat R^{l,\emptyset})=\{a\}$, we infer therefore that $f(\tilde{R}^{l,\emptyset})=\{a\}$. 	 Finally, note that $\tilde{R}^{l, \emptyset}$ only differs from $R^{l,\emptyset}$ in the preferences of the voters in $I$ on the alternatives $A\setminus \{a\}$. Since \ref{imp2} allows us to reorder the preferences of these voters on $A\setminus \{a\}$ arbitrarily without affecting the choice set, we derive that $f(R^{l,\emptyset})=\{a\}$ for all $l\in \{1, \dots, m-1\}$.

	Next, we focus on the induction step, i.e., we assume that $f(R^{l,X})=\{a\}$ for all $l\in \{1,\dots, m-1\}$ and all $X\subseteq A\setminus \{a, a_j\}$ with $|X|=z-1$ and show that the same is true for all sets $X'$ of size $z$. Recall for this that the derivation of $f(R^{l,X'})=\{a\}$ is independent of the naming of the alternatives in $A\setminus \{a\}$, and thus, it suffices to show that $f(R^{z+1, \{a_1, \dots, a_z\}})=\{a\}$. For this, let $Z=\{a_1, \dots, a_z\}$, $Z_{+a}=Z\cup\{a\}$, and $Z_{-l}=Z\setminus \{a_l\}$ for every $l\in \{1, \dots, z\}$, and consider the following profiles, where $l\in \{1,\dots, z\}$.\smallskip
	
	\begin{profile}{C{0.1\profilewidth} C{0.35\profilewidth} C{0.4\profilewidth}}
		{$R^{l, Z_{-l}}$: }& $I$: $a,A\setminus\{a\}$ & $N\setminus I$: $a_l, Z_{-l}\cup\{a\}, A\setminus Z_{+a}$\\
		{$\hat{R}^{l,Z_{-l}}$: }& $I$: $a,A\setminus\{a, a_{z+1}\}, a_{z+1}$ & $N\setminus I$: $a_l, Z_{-l}\cup\{a\}, A\setminus Z_{+a}$\\
		{$\tilde{R}^{z+1, Z}$: }& $I$: $a,A\setminus\{a, a_{z+1}\}, a_{z+1}$ & $N\setminus I$: $a_{z+1}, Z_{+a}, A\setminus (Z\cup\{a,a_{z+1}\})$\\
		{${R}^{z+1, Z}$: }& $I$: $a,A\setminus\{a\}$ & $N\setminus I$: $a_{z+1}, Z_{+a}, A\setminus (Z\cup\{a,a_{z+1}\})$
	\end{profile}
	
	Now, consider an arbitrary $l\in \{1,\dots, z\}$ and note that our induction hypothesis implies that $f(R^{l, Z_{-l}})=\{a\}$ since $|Z_{-l}|=z-1$. We derive $\hat{R}^{l, Z_{-l}}$ from $R^{l, Z_{-l}}$ by letting the voters $i\in I$ sequentially change their preference relations such that $a_{z+1}$ is their uniquely least preferred alternative. For each step, \ref{imp2} shows that the choice set is not allowed to change and thus, we infer that $f(\hat{R}^{l, Z_{-l}})=\{a\}$. In particular, since all voters in $I$ report $a_{z+1}$ as their unique least preferred alternative, it follows that $a_{z+1}\not\in f(\hat{R}^{l, Z_{-l}})$ regardless of the preference relations of the voters in $N\setminus I$. 
	
	Hence, we derive the profile $\tilde{R}^{z+1, Z}$ from $\hat{R}^{l,Z_{-l}}$ by replacing the preference relations of the voters in $N\setminus I$ one after another with $a_{z+1}, Z_{+a}, A\setminus (Z_{+a}\cup\{a_{z+1}\})$. Next, we will investigate one of these steps in detail and thus, let $R^l$ and $\bar R^l$ denote two consecutive profiles in the derivation of $\tilde{R}^{z+1, Z}$ from $\hat{R}^{l, Z_{-l}}$. Moreover, let $i$ denote the voter whose preference relation is different in $R^l$ and $\bar R^l$. First, note that $a_{z+1}\not\in f(R^l)$ and $a_{z+1}\not\in f(\bar R^l)$ because this alternative is vetoed out. Thus, if $f(R^l)\subseteq Z_{+a}\setminus \{a_l\}$, then $f(\bar R^l)\subseteq Z_{+a}$ because voter $i$ can manipulate $f$ by switching from $\bar R^l$ to $R^l$ otherwise. Additionally, $a_l$ cannot be chosen in $\bar R^l$; otherwise, voter~$i$ can manipulate $f$ by deviating from $R^l$ to $\bar R^l$. This means that if $f(R^l)\subseteq Z_{+a}\setminus \{a_l\}$, then $f(\bar R^l)\subseteq Z_{+a}\setminus \{a_l\}$. Since $f(\hat{R}^{l, Z_{-l}})=\{a\}$, it follows from a repeated application of this argument that $f(\tilde{R}^{z+1, Z})\subseteq Z_{+a}\setminus \{a_l\}$. Finally, note that we can apply this argument for every $l\in \{1,\dots,z\}$. This entails that $f(\tilde{R}^{z+1})\subseteq Z_{+a}\setminus Z$, i.e., $f(\tilde{R}^{z+1})=\{a\}$.
	
	As last step, we derive ${R}^{z+1, Z}$ from $\tilde{R}^{z+1, Z}$ by reordering the preference relations of the voters $i\in I$. Because $a$ stays their best alternative, it follows from a repeated application of \ref{imp2} that $f(R^{z+1,Z})=\{a\}$. Since the argument is closed under renaming alternatives in $A\setminus \{a\}$, this proves the induction step.\bigskip
	
	\textbf{Step 3:}
	It remains to show that the set of voters $I$ is indeed decisive. Hence, consider an arbitrary profile $R$ in which all voters $i\in I$ report the same preference relation. We need to show that $f(R)\subseteq T_i(R)$ for all $i\in I$. For this, let $a$ denote an alternative in $T_i(R)$ for some voter $i\in I$, and let $R'$ denote a profile such that all voters in $I$ report $a$ as their uniquely best alternative, and all voters in $N\setminus I$ report the same preference relation as in $R$. By Step 2, it follows that $f(R')=\{a\}$. Next, we let the voters $i\in I$ revert one after another back to $\succsim_i$. Since $a\in T_i(R)$ and ${\succsim_i}={\succsim_j}$ for all $i,j\in I$, it follows from a repeated application of \ref{imp2} that $f(R)\subseteq T_i(R)$. This proves that every vetoing group is also decisive for $f$. 
\end{proof}

\Cref{lem:CL} has several interesting consequences. First of all, it shows that the notions of vetoing and decisive groups are equivalent for strategyproof SCFs that satisfy non-imposition. Consequently, no strategyproof, non-imposing, and non-dictatorial SCF can have a vetoer. Furthermore, our lemma entails that there cannot be two disjoint vetoing groups of voters for such SCFs. The reason for this is that both sets need to be decisive for such an SCF, but there cannot be two disjoint decisive groups. In particular, this means that no group of voters $I$ with $|I|\leq\frac{n}{2}$ can be vetoing for an anonymous, strategyproof, and non-imposing SCF.

Furthermore, \Cref{lem:CL} shows that every group of voters $I$ with $|I|>\frac{n}{2}$ is decisive for a strategyproof SCF that satisfies non-imposition and the Condorcet loser property. The reason for this is that the Condorcet loser property entails that such groups are vetoing. Next, we use this insight to show that there is no strategyproof SCF that satisfies the Condorcet loser property and non-imposition. Note that we present here a simplified proof with completely indifferent voters. In the appendix, we give a more involved proof which avoids such artificial voters. 

\begin{restatable}{theorem}{CLandNI}\label{thm4}
	There is no strategyproof SCF that satisfies the Condorcet loser property and non-imposition if $m\geq 3$ and $n\geq 4$. 
\end{restatable}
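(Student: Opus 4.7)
The plan is to argue by contradiction: suppose $f$ is a strategyproof SCF on $n\geq 4$ voters and $m\geq 3$ alternatives satisfying the Condorcet loser property and non-imposition. We combine a reduction to $m=3$, an induction on $n$ with base case $n=4$, and a direct argument for this base case.

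\emph{Reduction to $m=3$.} Restrict $f$ to the subdomain of profiles in which some $m-3$ designated alternatives are uniformly ranked at the very bottom (in a fixed order) by every voter. Every such bottom alternative is then a Condorcet loser, hence never chosen by $f$ due to the Condorcet loser property, so the restriction behaves as an SCF on three alternatives. Strategyproofness and the Condorcet loser property are inherited, and non-imposition on the three remaining alternatives follows from \Cref{lem:CL} applied to the profile in which every voter uniquely top-ranks the relevant alternative and bottom-ranks the extras.

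\emph{Induction step on $n$.} For $n\geq 5$, suppose the theorem holds for $n-1$ voters and three alternatives, and assume for contradiction that such an $f$ exists for $n$ voters. Define $\tilde f$ on $(n-1)$-voter profiles by $\tilde f(R_1,\ldots,R_{n-1})=f(R_1,\ldots,R_{n-1},\succsim_0)$, where $\succsim_0$ is the completely indifferent preference relation. A completely indifferent voter contributes nothing to pairwise supports and cannot manipulate, so both strategyproofness and the Condorcet loser property transfer directly from $f$ to $\tilde f$. Non-imposition of $\tilde f$ follows from \Cref{lem:CL}: for any alternative $a$, consider the $(n-1)$-voter profile $R$ in which every real voter uniquely top-ranks $a$; since $n\geq 5$ we have $n-1 > n/2$ real voters uniquely top-ranking $a$ in the augmented profile $(R,\succsim_0)$, so \Cref{lem:CL} forces $f(R,\succsim_0)=\{a\}$ and hence $\tilde f(R)=\{a\}$. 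Thus $\tilde f$ satisfies all three properties for $n-1$ voters, contradicting the induction hypothesis.

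\emph{Base case $n=4$, $m=3$.} Let the alternatives be $a$, $b$, $c$. The plan is to construct a chain of profiles $R^0,R^1,\ldots,R^T$ connected by single-voter preference changes such that $R^0$ falls under \Cref{lem:CL} with unique winner $a$ (three voters uniquely top-ranking $a$), each transition preserves $a\in f$ via strategyproofness, and $R^T$ is a profile in which $a$ is a Condorcet loser; this contradicts the Condorcet loser property. Along the chain, one exploits the fact that when the changing voter keeps $a$ at an extreme position the direct strategyproofness constraint between the two neighbouring profiles is vacuous, so the leverage comes from \Cref{lem:CL} applied to $b$ or $c$ at side profiles (which pins down certain choice sets as singletons) together with the Condorcet loser property (which rules out third alternatives from being chosen), yielding enough constraints to carry the invariant $a\in f$ across the entire chain.

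\emph{Main obstacle.} The base case is the technical heart. \Cref{lem:CL} pins down the choice set only when a strict majority (three out of four) uniquely top-ranks the same alternative, yet any profile in which $a$ is a Condorcet loser has at most one voter uniquely top-ranking $a$. Bridging this three-to-one gap demands a carefully orchestrated sequence of single-voter transitions; strategyproofness alone is typically too weak at each step, and only in concert with \Cref{lem:CL} at neighbouring profiles (where a different alternative is the forced unique winner) and the Condorcet loser property (which eliminates a would-be escape alternative from the choice set) does it become strong enough to propagate $a\in f$ all the way to the target Condorcet-loser profile.
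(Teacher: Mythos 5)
Your induction step (padding with a completely indifferent voter and invoking \Cref{lem:CL} for non-imposition of the reduced rule) is exactly the paper's argument and is fine. The problems lie elsewhere. First, the reduction to $m=3$ does not work. If every voter ranks $m-3$ designated alternatives at the very bottom in a fixed order, then only the last of them is a Condorcet loser, so for $m\geq 5$ the Condorcet loser property does not prevent $f$ from choosing the other designated alternatives, and the restriction need not be an SCF on $\{a,b,c\}$ at all (the theorem assumes no Pareto-optimality that would exclude them). Worse, even for $m=4$ the Condorcet loser property fails to transfer downward: an alternative $x\in\{a,b,c\}$ that is a Condorcet loser among $\{a,b,c\}$ unanimously beats the appended bottom alternative, hence is \emph{not} a Condorcet loser in the full profile, so the restricted rule inherits no Condorcet-loser constraint and your base case cannot use that axiom. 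This is precisely why the paper never restricts to three alternatives but instead carries the set $X=A\setminus\{a,b,c\}$ through all constructions, placing it in the middle of the preferences so that the intended alternative remains a genuine Condorcet loser of the full profile at each step.

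Second, the base case $n=4$, which you yourself identify as the technical heart, is only a plan: no chain of profiles is exhibited, and the claim that strategyproofness together with \Cref{lem:CL} at side profiles and Condorcet-loser exclusions ``yields enough constraints to carry the invariant $a\in f$'' is exactly what would have to be proved. Note also that your intended terminal contradiction (reaching a profile where $a$ is simultaneously chosen and a Condorcet loser) differs from the paper's, which instead forces two incompatible values $f(R^9)=\{a\}$ and $f(R^9)=\{b\}$ at one profile reached from two different directions; nothing you wrote shows your variant is attainable. Finally, the remark that strategyproofness is ``vacuous'' when the changing voter keeps $a$ at an extreme position is backwards: when $f(R)=\{a\}$ and $a$ is the changing voter's unique top or unique bottom, Kelly-strategyproofness pins the choice set down completely, and these are the very steps the paper's construction relies on. As it stands, the proposal establishes the induction step but not the theorem.
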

\begin{proof}
	We prove the statement by induction over $n\ge 4$.\medskip
	
	\textbf{Induction basis:} Assume for contradiction that there is a strategyproof SCF $f$ for $n=4$ voters and $m\geq 3$ alternatives that satisfies the Condorcet loser property and non-imposition. First, consider the profiles $R^1$ to $R^5$ shown below.\smallskip
	
	\begin{profile}{C{0.05\profilewidth} C{0.21\profilewidth} C{0.21\profilewidth} C{0.21\profilewidth} C{0.21\profilewidth}}
		$R^1$: & $1$: $a, c, X, b$    & $2$: $a, b, X, c$ & $3$: $a, b, X, c$ & $4$: $b, X, c, a$\\
		$R^2$: & $1$: $\{a,c\}, X+b$  & $2$: $a, b, X, c$ & $3$: $a, b, X, c$ & $4$: $b, X, c, a$\\
		$R^3$: & $1$: $\{a,c\}, X+b$  & $2$: $a, c, X, b$ & $3$: $a, b, X, c$ & $4$: $c, X, \{a,b\}$\\
		$R^4$: & $1$: $\{a,c\}, X+b$  & $2$: $a,c, X, b$ & $3$: $b,a, X, c$ & $4$: $c,X,\{a,b\}$\\
		$R^5$: & $1$: $\{a,c\}, X+b$  & $2$: $a,c, X, b$ & $3$: $b,a, X, c$ & $4$: $c,b,X,a$
	\end{profile}
	
	\Cref{lem:CL} shows that $f(R^1)=\{a\}$ since every group of $3$ voters needs to be decisive for $f$. Moreover, $c$ is the Condorcet loser in $R^1$, even if voter 1 is indifferent between $a$ and $c$. Thus, we replace next the preference relation of voter 1 with $\{a,c\}, X+b$, where $X+b=X\cup \{b\}$, to derive the profile $R^2$. \ref{imp2} implies that $f(R^2)\subseteq \{a,c\}$ because otherwise voter 1 can manipulate by reverting back to $R^1$. Moreover, $c\not\in f(R^2)$ due to the Condorcet loser property and we hence deduce that $f(R^2)=\{a\}$. As second step, we let voter 2 change his preference relation to $a,c,X,b$ and voter~4 change his preference to $c,X,\{a,b\}$ in order to make $b$ into the Condorcet loser. It follows from \ref{imp1} and \ref{imp2} that the choice set does not change during these steps since the current winner $a$ is the best alternative of voter~2 after the manipulation and the worst alternative of voter~$4$ before the manipulation. Hence, these steps result in the profile $R^3$ with $f(R^3)=\{a\}$. Furthermore, observe that $b$ is the Condorcet loser in $R^3$, even if voter 3 swaps $a$ and $b$. Thus, we derive the profile $R^4$ from $R^3$ by applying this modification. The Condorcet loser property requires that $b\not\in f(R^4)$, which entails, in turn, that $f(R^4)=\{a\}$; otherwise, voter~3 can manipulate $f$ by deviating from $R^4$ to $R^3$. As last step, we let voter 4 change his preference relation to $c,b,X,a$ to derive the profile $R^5$. Since $f(R^4)=\{a\}\subseteq B_{4}(R^4)$, it follows from \ref{imp1} that $f(R^5)\subseteq \{a,b\}$.
	
	Next, observe that we can apply analogous steps for profiles that are symmetric with respect to the voters or alternatives. Thus, we infer for the choice sets of the profiles $R^6$, $R^7$, and $R^8$ that $f(R^6)\subseteq \{a,c\}$, $f(R^7)\subseteq \{a,b\}$, and $f(R^8)\subseteq \{b,c\}$.\smallskip
	
	\begin{profile}{C{0.05\profilewidth} C{0.21\profilewidth} C{0.21\profilewidth} C{0.21\profilewidth} C{0.21\profilewidth}}
		$R^6$: & $1$: $\{b,c\}, X+a$  & $2$: $a,c, X, b$ & $3$: $b,a, X, c$ & $4$: $c,b,X,a$\\
		$R^7$: & $1$: $a,b, X, c$ & $2$: $c,a, X, b$  & $3$: $\{b,c\}, X+a$ & $4$: $b,c,X,a$\\
		$R^8$: & $1$: $a,b, X, c$ & $2$: $c,a, X, b$  & $3$: $\{a,c\}, X+b$ & $4$: $b,c,X,a$
	\end{profile}
	
	Note that, if $b\in f(R^5)$, then voter 1 can manipulate by switching to $R^6$ as $f(R^6)\subseteq \{a,c\}$. Hence, we derive that $f(R^5)=\{a\}$. By a symmetric argument for $R^7$ and $R^8$, it follows that $f(R^7)=\{b\}$. 
	
	Finally, consider the profile $R^9$ shown below. \smallskip
	
	\begin{profile}{C{0.05\profilewidth} C{0.21\profilewidth} C{0.21\profilewidth} C{0.21\profilewidth} C{0.21\profilewidth}}
		$R^9$: & $1$: $a,b,X,c$  & $2$: $a,b,X,c,$ & $3$: $b,a, X, c$ & $4$: $b,a,X,c$
	\end{profile}
	
	We can derive the profile $R^9$ from $R^5$ and $R^7$. In more detail, we obtain $R^9$ from $R^5$ by replacing the preference relations of voters 1 and 2 with $a,b,X,c$ and the preference relation of voter $4$ with $b,a,X,c$. If we apply these steps one after another, \ref{imp1} and \ref{imp2} imply that $f(R^9)=\{a\}$. On the other hand, we obtain $R^9$ from $R^7$ by replacing the preference relation of voters 3 and 4 with $b,a,X,c$ and the preference relation of voter 2 with $a,b,X,c$ and obtain $f(R^9)=\{b\}$ by an analogous argument. This is a contradiction since $f(R^9)=\{a\}$ and $f(R^9)=\{b\}$ cannot be simultaneously true, which shows that there is no strategyproof SCF that satisfies non-imposition and the Condorcet loser property if $n= 4$ and $m\geq 3$.\medskip
	
	\textbf{Induction step:} Assume for contradiction that there is a strategyproof SCF $f$ for $n>4$ voters and $m\geq 3$ alternatives that satisfies non-imposition and the Condorcet loser property. Consider the following SCF $g$ for $n-1$ voters and $m$ alternatives: given a profile $R$ on $n-1$ voters, $g$ adds a new voter who is indifferent between all alternatives to derive a profile $R'$ on $n$ voters and returns $g(R) = f(R')$. Clearly, $g$ is strategyproof and inherits the Condorcet loser property from $f$. Furthermore, \Cref{lem:CL} shows that $g$ is non-imposing because every set of $n-1>\frac{n}{2}$ voters is decisive for $f$. Hence, we can construct a strategyproof SCF for $n-1$ voters that satisfies the Condorcet loser property and non-imposition if there is such an SCF for $n$ voters. Since our induction hypothesis states that no such SCF exists, we derive from the contraposition of this implication that there is no SCF satisfying all required axioms for $n> 4$ voters.	
\end{proof}

\begin{remark}\label{rem:indepCL}
	The axioms used in \Cref{thm4} are independent of each other. An SCF that only violates the Condorcet loser property is the Pareto rule. The SCF that returns all alternatives except the Condorcet loser only violates non-imposition. The SCF that returns all Pareto-optimal alternatives except the Condorcet loser only violates strategyproofness. The bounds on $n$ and $m$ are also tight. The majority rule satisfies all axioms if $m=2$, the Pareto rule satisfies all axioms if $n\leq2$, and a computer-aided approach proves that there is an SCF that satisfies all required axioms if $n=3$ and $m=3$. It is also possible to extend the SCF by hand to $m>3$ alternatives, but the resulting SCFs are purely technical and we therefore do not define them explicitly.
\end{remark}

\begin{remark}
	\shortciteA[Theorem~2]{Bran11c} has shown that no Condorcet extension can be strategyproof if $m\geq 3$ and $n\geq 3m$.
	By replacing the Condorcet loser property and non-imposition with Condorcet-consistency, careful inspection of the proof of \Cref{thm4} reveals that Condorcet-consistency and strategyproofness are already incompatible if $m\geq 3$ and $n\geq 4$. In particular, observe for this that $a$ is the Condorcet winner in the profile $R^4$ and thus, every Condorcet-consistent SCF satisfies $f(R^4)=\{a\}$. Departing from this insight, we can apply the same steps as in the proof of \Cref{thm4} since the Condorcet loser property is not used anymore.
\end{remark}

\begin{remark}
	\Cref{lem:CL} and \Cref{thm4} do not require the full power of non-imposition. For \Cref{lem:CL}, the following weakening holds: if an alternative $x$ can be uniquely chosen by a strategyproof SCF, then every vetoing group of voters $I$ can ensure that $x$ is the unique winner if they unanimously report it as their best alternative. For \Cref{thm4}, we can weaken non-imposition to the requirement that at least three alternatives can be returned as unique winner.
\end{remark}

\begin{remark}
A desirable strengthening of \Cref{thm4} would be to weaken the Condorcet loser property by only demanding that an alternative that is uniquely bottom-ranked by a majority of voters should not be chosen. A computer analysis has shown that this property is compatible with non-imposition and strategyproofness when $m\leq 3$ and $n\leq 6$, even if we additionally impose anonymity. We nevertheless believe that there may be an impossibility for larger values of $m$ and $n$. 
\todo{PL: anonymity and neutrality should be for free here; we can use an analogous averaging construction as in relax by using set union instead of probability averaging.}
\end{remark}

\section{Consequences for Randomized Social Choice}\label{sec:RSC}

So far, we have discussed our theorems in the context of set-valued social choice, but they also have consequences for randomized social choice, which is concerned with the study of \emph{social decision schemes (SDSs)}, i.e., functions that map preference profiles to \emph{lotteries} (i.e., probability distributions) over the alternatives. Since the notions of rank-basedness and support-basedness are independent of the type of the output of the function and merely define an equivalence relation over preference profiles, they can be straightforwardly extended to SDSs. For our other axioms, we consider variants in randomized social choice based on the support of lotteries, i.e., the set of alternatives with positive probability. For example, Pareto-optimality and the Condorcet loser property require that Pareto-dominated alternatives and Condorcet losers are always assigned probability~$0$. In other words, Pareto-optimality demands that Pareto-dominated alternatives are not in the support of any chosen lottery, a condition that is usually referred to as \emph{ex post efficiency}. Similarly, an SDS satisfies the Condorcet loser property if the Condorcet loser is never in the support of any chosen lottery. Next, an SDS satisfies non-imposition if every alternative is chosen with probability~$1$ for some profile. 
Finally, Kelly-strategyproofness translates to the notion of \dd-strategyproofness \shortcite{Bran17a}. To this end, we say that, given a preference relation~$\succsim$, a lottery $p$ \emph{deterministically dominates} a lottery $q$ if and only if $\supp(p) \succsim \supp(q)$. 
Then, an SDS $f$ is called \emph{\dd-strategyproof} if $f(R')$ 
does not strictly deterministically dominate 
$f(R)$ for all voters $i\in N$ and all profiles $R,R'$ such that ${\succsim_j}={\succsim_j'}$ for all $j\in N\setminus \{i\}$. 
Note that \dd-strategyproofness is weaker than most strategyproofness notions considered in the literature \shortcite<see, e.g.,>{Gibb77a,Bran17a,ABBB15a,BBEG16a}. In particular, it is weaker than weak \sd-strategyproofness as used by \shortciteA{BBEG16a} to prove a rather sweeping impossibility: no anonymous and neutral SDS is weakly \sd-strategyproof and \sd-efficient if $n\geq 4$ and $m\geq 4$. 

Based on these axioms for SDSs, we can translate our results to the randomized context. Note for this that we can turn every SDS $f$ into an SCF $g$ by returning the support of $f(R)$ instead of the lottery itself. Moreover, it is easy to verify that all our axioms carry over from the SDS $f$ to the SCF $g$ because they are only defined based on the support. Therefore, we derive the following corollaries. 

\begin{corollary}\label{col1}
	There is no rank-based SDS that satisfies \emph{ex post} efficiency and \dd-strategy\-proof\-ness if $m\geq 4$ and $n\geq 3$, or if $m\geq 5$ and $n\ge 2$.
\end{corollary}

\begin{corollary}\label{col2}
	Every support-based SDS that satisfies \emph{ex post} efficiency and \dd-strategy\-proof\-ness assigns positive probability to at least one most preferred alternative of every voter if $m\geq 3$.
\end{corollary}

\begin{corollary}\label{col3}
	There is no SDS that satisfies the Condorcet loser property, non-imposition, and \dd-strategyproofness if $m\geq 3$ and $n\geq 4$.
\end{corollary}

\Cref{col1} can be seen as a strengthening of the impossibility of \shortciteA{BBEG16a} for the class of rank-based SDSs as we require both a weaker strategyproofness notion and a weaker efficiency notion. However, our result only holds for rank-based SDSs rather than the more general class of anonymous SDSs. \Cref{col2} implies that at least one of the most preferred alternatives of every voter receives positive probability, a property that is known as \emph{positive share} in the context of dichotomous preferences \shortcite{BMS05a,BBPS21a}. When strengthening Pareto-optimality to \sd-efficiency, we derive an impossibility (see \Cref{rem:SD}) and this impossibility can be interpreted as a strengthening of the result by \shortciteA{BBEG16a} for the subclass of support-based SDSs. 
Finally, \Cref{col3} is unrelated to the aforementioned results as the Condorcet loser property is independent of the other axioms. This result can be interpreted as a new far-reaching impossibility for SDSs.

\section{Conclusion}

We have studied which SCFs satisfy strategyproofness according to Kelly's preference extension and obtained results for three broad classes of SCFs. A common theme of our results is that strategyproofness entails that potentially ``bad'' alternatives need to be chosen. In particular, we have shown that \emph{(i)} every strategyproof rank-based SCF returns a Pareto-dominated alternative in at least one profile, \emph{(ii)} every strategyproof support-based SCF that satisfies Pareto-optimality returns at least one most preferred alternative of every voter, and \emph{(iii)} every strategyproof non-imposing SCF returns the Condorcet loser in at least one profile. All of these impossibilities rely on general insights about decisive, nominating, and vetoing groups of voters for strategyproof SCFs. Taken together, our results show that there is only room for rather indecisive strategyproof SCFs such as the Pareto rule, the omninomination rule, the SCF that returns all top-ranked alternatives that are Pareto-optimal, or the SCF that returns all alternatives except Condorcet losers. Furthermore, since we require sufficiently weak axioms, our results directly extend to randomized social choice and we therefore derive three impossibilities as corollaries for this setting.

In comparison to other results on the strategyproofness of set-valued SCFs, we employ a very weak notion of strategyproofness. In particular, our notion of strategyproofness is weaker than those used by \shortciteA{DuSc00a}, \shortciteA{BDS01a}, \shortciteA{ChZh02a}, \shortciteA{Rodr07a}, and \shortciteA{Sato08a}. This is possible because we consider the more general domain of weak preferences, which explicitly allows for ties. Interestingly, all proofs except that of Claim 1 in \Cref{thm1} can be transferred to the domain of strict preferences by carefully breaking ties and replacing Kelly-strategyproofness with the significantly stronger strategyproofness notion introduced by \shortciteA{DuSc00a}. While the resulting theorems are covered by the Duggan-Schwartz impossibility, this raises intriguing questions concerning the relationship between strategyproofness results for weak and strict preferences. 

In contrast to previous impossibilities for Kelly's preference extension \shortcite{BBGH18a,BSS19a}, our proofs do not rely on the availability of artificial voters who are completely indifferent between all alternatives.
Moreover, the results are tight in the sense that they cease to hold if we remove an axiom, reduce the number of alternatives or voters, weaken the notion of strategyproofness, or require strict preferences. For example, the essential set \shortcite{DuLa99a,Lasl00b} and a handful of other support-based Condorcet extensions satisfy strategyproofness if preferences are strict and participation for unrestricted preferences \shortcite{Bran11c,BBGH18a}. Our results thus provide important insights on when and why strategyproofness can be attained.

\acks{This work was supported by the Deutsche Forschungsgemeinschaft under grant \mbox{BR 2312/12-1}.
We thank the anonymous reviewers for helpful comments. 
A preliminary version of this article appeared in the Proceedings of the 20th International Conference on Autonomous Agents and Multiagent Systems (May 2021). Results from this article were presented at the 8th International Workshop on Computational Social Choice (June 2021).
}

\appendix

\section{Alternative Proof of \Cref{thm4}}
Subsequently, we discuss an alternative proof for \Cref{thm4} which does not rely on completely indifferent voters.
	
\CLandNI*

\begin{proof} 	
	Assume for contradiction that there is a non-imposing SCF $f$ that satisfies the Condorcet loser property and strategyproofness for $n\geq 4$ voters and $m\geq 3$ alternatives. Since the Condorcet loser property strongly depends on the parity of the number of voters, we proceed with a case distinction on $n$. For both cases, it is important that \Cref{lem:CL} and \ref{imp2} show that an alternative is the unique winner if it is uniquely top-ranked by at least $l=\lceil\frac{n+1}{2}\rceil$ voters. This follows from the observation that every set $I\subseteq N$ with $|I|\geq l$ is vetoing for $f$. Thus, \Cref{lem:CL} shows that such sets are also decisive, i.e., $f$ needs to choose an alternative $x$ as a unique winner if all voters in $I$ report the same preference relation with $x$ as unique top choice. Finally, \ref{imp2} allows to reorder the alternatives $y\in A\setminus \{a\}$ without affecting the choice set, which proves this auxiliary claim.\bigskip
	
	\textbf{Case 1: $n$ is odd}
	
	First, assume that $f$ is defined for an odd number of voters $n\geq 4$, and consider the following profiles, where $X=A\setminus \{a,b,c\}$.\smallskip
	
	\begin{profile}{C{0.05\profilewidth} C{0.16\profilewidth} C{0.28\profilewidth} C{0.27\profilewidth}}
		$R^1$: & $1$: $a, b, X, c$  & $[2\dots l]$: $a,c,X,b$ & $[l+1\dots n]$: $b,X,\{a,c\}$
	\end{profile}
	
	\begin{profile}{C{0.05\profilewidth} C{0.16\profilewidth} C{0.28\profilewidth} C{0.16\profilewidth} C{0.28\profilewidth}}
		$R^2$: & $1$: $a, b, X, c$  & $[2\dots l-1]$: $a,c,X,b$ & $l$: $c,a,X,b$ & $[l+1\dots n]$: $b,X,\{a,c\}$
	\end{profile}
	
	\begin{profile}{C{0.05\profilewidth} C{0.16\profilewidth} C{0.28\profilewidth} C{0.16\profilewidth} C{0.28\profilewidth}}
		$R^3$: & $1$: $a, b, X, c$  & $[2\dots l-1]$: $a,c,X,b$ & $l$: $c,a,X,b$ & $[l+1\dots n]$: $b,c,X,a$
	\end{profile}
	
	\begin{profile}{C{0.05\profilewidth} C{0.27\profilewidth} C{0.27\profilewidth} C{0.27\profilewidth}}
		$R^4$: &  $[1\dots l-1]$: $a,c,X,b$ & $l$: $c,a,X,b$ & $[l+1\dots n]$: $c,b,X,a$
	\end{profile}
	
	\begin{profile}{C{0.05\profilewidth} C{0.27\profilewidth} C{0.27\profilewidth} C{0.27\profilewidth}}
		$R^5$: &  $[1\dots l-1]$: $a,c,X,b$ & $l$: $c,a,X,b$ & $[l+1\dots n]$: $b,c,X,a$
	\end{profile}
	
	First, note that \Cref{lem:CL} and \ref{imp2} show that $f(R^1)=\{a\}$. Moreover, $c$ is the Condorcet loser in $R^1$ because every voter prefers $a$ weakly to $c$ and all voters in $\set{l+1}{n}$ and voter $1$ prefer all alternatives in $A\setminus \{a,c\}$ strictly to $c$. Alternative $c$ even remains the Condorcet loser if voter $l$ swaps $a$ and $c$. Hence, let $R^2$ denote the resulting profile and observe that $c\not\in f(R^2)$ because of the Condorcet loser property. In turn, strategyproofness implies that $f(R^2)=\{a\}$ if $c\not\in f(R^1)$; otherwise, voter $l$ can manipulate by reverting back to $R^1$ as he prefers $\{a\}$ to every other subset of $A\setminus \{c\}$.
	
	As the next step, we subsequently replace the preference relations of the voters $\seti{l+1}{n}$ with $b,c,X,a$. \ref{imp1} implies for each of these steps that a subset of $\{a,c\}$ is chosen if it has been chosen before the step. Since $f(R^2)=\{a\}$, we deduce that this process results in a profile $R^3$ with $f(R^3)\subseteq \{a,c\}$ . Moreover, $f(R^3)\neq \{c\}$ as otherwise voter 1 can manipulate by swapping $a$ and $b$: after this step, $b$ is uniquely top-ranked by more than half of the voters and therefore \Cref{lem:CL} and \ref{imp2} imply that it is the unique winner. Since voter 1 prefers $\{b\}$ to $\{c\}$, strategyproofness requires that $f(R^3)\in \{\{a\}, \{a,c\}\}$.
	
	Next, we discuss another derivation for $f(R^3)$ which proves that $f(R^3)\not\in \{\{a\}, \{a,c\}\}$. For this, consider the profile $R^4$ and note that $f(R^4)=\{c\}$ because more than half of the voters report $c$ as their favorite choice. Moreover, $b$ is the Condorcet loser in $R^4$ as it is uniquely bottom-ranked by the voters $\seti{1}{l}$. This even holds if the voters in $\seti{l+1}{n}$ change their preference. Thus, we let these voters swap $b$ and $c$, and the Condorcet loser property always implies for the resulting profile that $b$ is not chosen. Just as for $R^3$, strategyproofness implies then that $c$ remains the unique winner after every step because otherwise, a voter can manipulate by reverting this modification. Thus, this process results in the profile $R^5$ with $f(R^5)=\{c\}$.
	
	Finally, we derive the profile $R^3$ from $R^5$ by replacing the preference relation of voter 1 with $a,b,X,c$. Strategyproofness from $R^5$ to $R^3$ implies that $f(R^3)\neq \{a\}$ and $f(R^3)\neq \{a,c\}$ as otherwise, voter 1 can manipulate by deviating from $R^5$ to $R^3$. This is in conflict with our previous observation and hence, there is no strategyproof SCF for odd $n\geq 5$ that satisfies non-imposition and the Condorcet loser property. \medskip
	
	\textbf{Case 2: $n$ is even}
	
	As second case, we assume that $f$ is defined for an even number of voters $n\geq 4$. First, note that the induction basis in the proof of \Cref{thm4} shows that no strategyproof and non-imposing SCF for $m\geq 3$ alternatives and $n=4$ voters satisfies the Condorcet loser property, even if we forbid completely indifferent voters. The reason for this is that no such voters are required for the proof. Subsequently, we demonstrate how we can reduce the case with $n>4$ voters to the case with $n=4$ voters. Hence, assume that there is a strategyproof SCF $f$ for $n>4$ voters, $n$ even, and $m\geq 3$ alternatives that satisfies the Condorcet loser property and non-imposition. We use this SCF $f$ to define another SCF $g$ for $n=4$ voters as follows (where $X=A\setminus \{a,b,c\}$): given a profile $R$ on $4$ voters, $g$ adds $(n-4)/2$ voters whose preference relation is $c,X,b,a$ and $(n-4)/2$ whose preference relation is $a,b,X,c$. Then, $g$ returns the choice set of $f$ on the resulting profile $R'$, i.e., $g(R)=f(R')$. Subsequently, we prove that $g$ satisfies all criteria required for deriving the impossibility in the $4$ alternative case. As a consequence, $g$ cannot exist, which implies that $f$ also violates one of the required axioms. 
	
	First, note that $g$ inherits the strategyproofness of $f$ because any manipulation of $g$ is by definition also a manipulation of $f$. Moreover, $g$ cannot return the Condorcet loser because the Condorcet loser in a profile $R$ on $4$ voters is also the Condorcet loser in the profile $R'$ that is obtained after $g$ adds the $n-4$ extra voters. The reason for this is that the preference of the first half of these $n-4$ voters is inverse to the other half. In more detail, adding these $n-4$ voters increases every support $s_{xy}(R)$ by $(n-4)/2$ if $x\in \{a,b,c\}$ or $y\in \{a,b,c\}$ and the supports $s_{xy}(R)$ with $x,y\in X$ do not change at all. Consequently, the Condorcet loser does not change and $g$ inherits the Condorcet loser property from $f$. 

	The last axiom required for the proof of \Cref{thm4} is non-imposition. However, a close inspection of the proof shows that we actually do not need full non-imposition, but it is sufficient if there are three alternatives that can be chosen uniquely. Hence, we only show that $g$ can return $a$, $b$, and $c$ as unique winner. For $a$ and $c$, this follows from \Cref{lem:CL} because $g$ adds $(n-4)/2$ voters with preference $a,b,X,c$ and $(n-4)/2$ voters with preference $c,X,b,a$ to derive the input profile $R'$ for $f$. Hence, if all of the four original voters report $a,b,X,c$, then $f(R')=\{a\}$, and if all four voters report $c,X,b,a$, then $f(R')=\{c\}$. The reason for this is that in both cases, the corresponding alternative is uniquely top-ranked by more than half of the voters in $R'$ and \Cref{lem:CL} and \ref{imp2} thus show that this alternative needs to be chosen uniquely. 
	
	A slightly more complicated argument is required for showing that $g$ returns can return~$b$ as unique winner. Thus, consider the profiles $R$ and $R'$ shown below.\smallskip
	
	\begin{profile}{C{0.05\profilewidth} C{0.27\profilewidth} C{0.27\profilewidth} C{0.27\profilewidth}}
		$R$: & $[1\dots 4]$: $b, X, c,a$  & $[5\dots 2+n/2]$: $b,a,X,c$ & $[3+n/2\dots n]$: $c,X,b,a$
	\end{profile}
	
	\begin{profile}{C{0.05\profilewidth} C{0.27\profilewidth} C{0.27\profilewidth} C{0.27\profilewidth}}
		$R'$: & $[1\dots 4]$: $b, X, c,a$  & $[5\dots 2+n/2]$: $a,b,X,c$ & $[3+n/2\dots n]$: $c,X,b,a$
	\end{profile}
	
	First, note that \Cref{lem:CL} and \ref{imp2} imply that $f(R)=\{b\}$. Moreover, alternative~$a$ is uniquely bottom-ranked by all voters in ${\set{1}{4}}\cup{\set{3+n/2}{n}}$ and it thus is the Condorcet loser. This is also true if the voters $\seti{5}{2+n/2}$ swap $a$ and $b$ one after another. Hence, the Condorcet loser property implies that $a$ is not chosen after these swaps and strategyproofness entails then that $b$ is still the unique winner since all voters in $\set{5}{2+n/2}$ prefer $\{b\}$ to every other subset of $A\setminus \{a\}$. This means that $f(R')=\{b\}$. Finally, note that $g(R'')=f(R')=\{b\}$ for the profile $R''=({\succsim_1'}, {\succsim_2'}, {\succsim_3'}, {\succsim_4'})$ because the preferences of the last $n-4$ voters are equal to those used by $g$ to extend profiles consisting of $4$ voters to profiles for $n$ voters. This proves that $g$ can also return $b$ as unique winner, and thus, the proof in the main body shows that $g$ cannot exist. On the other hand, we have shown that, if there is a strategyproof SCF for an even number of voters $n>4$ that satisfies the Condorcet loser property and non-imposition, $g$ exists. By the contraposition of this implication, the impossibility generalizes to all even numbers of voters $n>4$.
\end{proof}

\section{Examples for the Tightness of our Results}

In this appendix, we discuss the SCFs that have been used to show that our results are tight. First, we deal with rank-basedness under strict preferences. Therefore, we consider the variant of the $2$-plurality rule mentioned in \Cref{rem:rankbstrict}, which we call $2^*$-plurality. For introducing this rule, we define the plurality score $\mathit{PL}(a,R)$ of an alternative $a$ in a profile $R$ as the number of voters that top-rank alternative $a$ in the profile~$R$. Given a profile $R$, let $a_R$ denote the alternative with the second highest plurality score. Then, the $2^*$-plurality rule, abbreviated by $2^*\text{-}PL(R)$, chooses precisely the alternatives $x$ with $\mathit{PL}(x,R)\geq \mathit{PL}(a_R,R)$ and $\mathit{PL}(x,R)>0$, i.e., $2^*\text{-}PL(R)=\{x\in A\colon PL(x,R)\geq PL(a_R, R) \land PL(x, R)>0\}$.

\begin{proposition}\label{prop:2PL}
	For strict preferences, the $2^*$-plurality rule is rank-based, Pareto-optimal, and strategyproof, but no voter is a nominator if $m\geq 3$ and $n\geq 5$.
\end{proposition}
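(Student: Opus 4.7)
The plan is to verify the four claims separately. Three are quick. Rank-basedness holds because $\mathit{PL}(x,R)$ counts only the voters who rank $x$ first, which is determined by the rank tuples. Pareto-optimality holds because, under strict preferences, if $a$ Pareto-dominates $b$ then every voter ranks $a$ above $b$, so $b$ is never top-ranked; hence $\mathit{PL}(b,R)=0$ and $b\notin 2^*\text{-}PL(R)$. For the failure of the nominator property, given any voter $i$ and using $m\ge 3$ and $n\ge 5$, I pick distinct alternatives $a,b,c$ and four distinct voters $j_1,\dots,j_4\in N\setminus\{i\}$ and build a profile where $i$ top-ranks $c$, $j_1,j_2$ top-rank $a$, $j_3,j_4$ top-rank $b$, and the remaining voters top-rank $a$. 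Then $\mathit{PL}(c)=1$ while $\mathit{PL}(a),\mathit{PL}(b)\ge 2$, so the second-highest plurality score is at least $2>1$ and $c\notin 2^*\text{-}PL(R)$.

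The bulk of the work will be strategyproofness. Suppose voter $i$ with true top $x$ reports a preference with top $y\ne x$, producing profile $R'$. The only score changes are $\mathit{PL}_{R'}(x)=\mathit{PL}_R(x)-1$ and $\mathit{PL}_{R'}(y)=\mathit{PL}_R(y)+1$. Under Kelly's extension on strict preferences, $f(R')\succ_i f(R)$ requires $f(R')\subseteq\{z\colon z\succsim_i w\}$, where $w=\max_i f(R)$ is voter $i$'s favorite alternative in $f(R)$. I split into two cases. If $x\in f(R)$, then $w=x$, forcing $f(R')=\{x\}$; but $2^*\text{-}PL$ returns a singleton only when a single alternative has positive plurality score, while $y\ne x$ also has a positive score in $R'$, a contradiction.

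If $x\notin f(R)$, then $\mathit{PL}_R(x)<s_2$ (with $s_2$ the second-highest score in $R$), so $|f(R)|\ge 2$ and no top scorer of $R$ equals $x$. If $y$ is itself a top scorer in $R$, a direct computation shows $f(R')=f(R)$, ruling out any improvement. Otherwise each top scorer $t$ retains score $s_1$ in $R'$ and hence lies in $f(R')$; combined with $w\succsim_i t$ (since $t\in f(R)$), the requirement $t\succsim_i w$ yields $t=w$ for every top scorer, so $w$ is the unique top scorer. To exclude some alternative of $f(R)\setminus\{t,y\}$ from $f(R')$, the new second-highest score $s_2'$ must strictly exceed $s_2$; tracking the effect of the single swap shows this happens precisely when $s_1>s_2$ and $\mathit{PL}_R(y)=s_2$, in which case $f(R')=\{t,y\}$ while $y\in f(R)$. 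Then the pair $(t,y)\in f(R')\times f(R)$ forces $t\succsim_i y$ and the pair $(y,t)\in f(R')\times f(R)$ forces $y\succsim_i t$, which is impossible for strict preferences with $t\ne y$. Hence no manipulation exists.

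The main obstacle is this case $x\notin f(R)$: one must carefully track how $s_2$ can shift under one voter's swap in order to isolate the unique scenario $f(R')=\{t,y\}$ that even threatens a beneficial deviation, and then exploit the cross-comparison structure of Kelly's extension (every element of $f(R')$ must dominate every element of $f(R)$) to block the manipulation in that remaining case.
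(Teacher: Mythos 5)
Your proposal is correct and takes essentially the same route as the paper's proof: rank-basedness, Pareto-optimality, and the failure of the nominator property are verified in the same direct way, and strategyproofness is shown by the same case analysis on whether the manipulator's true top is chosen and on the reported top's plurality score relative to the highest and second-highest scores, with the Kelly cross-comparisons eliminating the one non-trivial outcome $f(R')=\{t,y\}$ (the paper's $\{b,c\}$ case). The only (harmless) omission is the misreport that keeps the same top alternative, which leaves all plurality scores, and hence the choice set, unchanged and so cannot be a Kelly improvement.
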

\begin{proof}
	First, note that $2^*$-plurality is by definition rank-based and it satisfies Pareto-optimality as it only returns alternatives that are top-ranked by some voters. This criterion entails Pareto-optimality as we assume strict preferences. Moreover, no voter is a nominator for $2^*$-plurality if there are at least $5$ voters and at least $3$ alternatives because the top-ranked alternative $c$ of a voter can have plurality score $1$ and two other alternatives may have plurality score $2$ or more. Hence, it only remains to show that $2^*\text{-}PL$ is strategyproof. We assume for contradiction that this is not the case, i.e., that there are preference profiles $R$ and $R'$ and a voter $i$ such that ${\succsim_j}={\succsim_j}'$ for all $j\in N\setminus \{i\}$ and $2^*\text{-}PL(R')\succ_i 2^*\text{-}PL(R)$. We proceed with a case distinction on whether voter $i$'s most preferred alternative in $R$, denoted by $a$, is chosen.
	
	First, assume that $a\in 2^*\text{-}PL(R)$. This means that voter $i$ can only manipulate if $2^*\text{-}PL(R')=\{a\}$ as otherwise, there is an alternative $x\in 2^*\text{-}PL(R')$ with $a\succ_i x$. Moreover, if $2^*\text{-}PL(R)=\{a\}$, voter $i$ can also not manipulate as his best alternative is the unique winner. Hence, another alternative $b$ is chosen by $2^*$-plurality, which implies that another voter reports $b$ as his most preferred alternative in $R$. As a consequence, $PL(b,R')>0$ and therefore, $2^*\text{-}PL(R')\neq \{a\}$ as $2^*$-plurality only returns a single winner if all voters report it as their best choice. Hence, no manipulation is possible in this case.
	
	Next, assume that $a\not\in 2^*\text{-}PL(R)$ and let $b$ denote voter $i$'s best alternative in $R'$. Note that $a\neq b$ because the plurality scores, and therefore also the choice set of $2^*$-plurality, do not change otherwise. We use another case distinction with respect to the plurality score of $b$ in $R$. First, assume that $PL(b,R)\geq PL(a_R, R)>0$, which means that $b\in 2^*\text{-}PL(R)$. In particular, the claim that $PL(a_R, R)>0$ is true as $a\not\in 2^*\text{-}PL(R)$ but $PL(a, R)>0$. This means also that $2^*$-plurality elects at least two alternatives in $R$, and we choose $c\in 2^*\text{-}PL(R)\setminus \{b\}$ as the alternative with the highest plurality score in $A\setminus\{b\}$. Now, if $PL(c,R)>PL(b,R)$, alternative $b$ has the second highest plurality score in $R$, i.e., $PL(b,R)\geq PL(x,R)$ for all $x\in A\setminus \{b,c\}$. Since $PL(b,R')=PL(b,R)+1$, $PL(a,R')=PL(a,R)-1$, and $PL(x,R')=PL(x,R)$ for all $x\in A\setminus \{a,b\}$, it follows therefore that $PL(c,R')\geq PL(b, R')$ and $PL(b,R')> PL(x,R')$ for all $x\in A\setminus \{b,c\}$. Hence, $2^*\text{-}PL(R')=\{b,c\}$. On the other hand, if $PL(b,R)\geq PL(c,R)$, $b$ is the alternative with the highest plurality score in $R$, and $c$ the one with the second highest plurality score. Since $PL(b,R')=PL(b,R)+1$, $PL(a,R')=PL(a,R)-1$, and $PL(x,R')=PL(x,R)$ for all $x\in A\setminus \{a,b\}$, it follows that $PL(b, R')>PL(c,R')\geq PL(x,R')$ for all $x\in A\setminus \{b,c\}$, which shows that $c$ is also in $R'$ the alternative with the second highest plurality score. Hence, we derive also in this case that $\{b,c\}\subseteq 2^*\text{-}PL(R')$. Hence, we have in both cases that $\{b,c\}\subseteq 2^*\text{-}PL(R)\cap 2^*\text{-}PL(R')$, which contradicts that voter $i$ benefits by deviating from $R$ to $R'$ because he is not indifferent between $b$ and $c$, i.e., there are alternatives $x\in 2^*\text{-}PL(R)$, $y\in 2^*\text{-}PL(R')$ such that $x\succ_i y$. 
	
	Finally, assume that $PL(b,R)<PL(a_R,R)$ and note that this assumption entails that there are at least two alternatives with a higher plurality score than $a$ and $b$, i.e., $PL(a_R,R)>PL(a,R)$ and $PL(a_R,R)>PL(b,R)$. Hence, $PL(b,R')=PL(b,R)+1\leq PL(a_R,R)$. This means that $PL(a_R,R)=PL(a_{R'},R')$ as $b$ has a plurality score of at most $PL(a_R,R)$ in $R'$. Since the plurality scores of all alternatives $x$ with $PL(x,R)\geq PL(a_R,R)$ have not been affected by the manipulation, it follows that every alternative chosen in $2^*\text{-}PL(R)$ is also chosen after the manipulation, i.e., $2^*\text{-}PL(R)\subseteq 2^*\text{-}PL(R')$. Since $|2^*\text{-}PL(R)|\geq 2$, deviating from $R$ to $R'$ is no manipulation because we can find alternatives $x\in 2^*\text{-}PL(R)$, $y\in 2^*\text{-}PL(R)\subseteq \text{-}PL(R')$ such that $x\succ_i y$. 
	Hence, no case allows for a manipulation, which means that $2^*$-plurality is strategyproof for strict preferences. 
\end{proof}

Next, we consider \Cref{rem:rankb} in which we claim that the bounds on $n$ and $m$ in \Cref{thm1} are tight as the Pareto rule is rank-based for small values of $n$ and $m$. We prove this statement subsequently. 

\begin{proposition}\label{prop:POrb}
	The Pareto rule is rank-based, Pareto-optimal, and strategyproof if $m\leq 3$, or if $m\leq 4$ and $n\leq 2$. 
\end{proposition}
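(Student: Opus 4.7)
Pareto-optimality is immediate from the definition of the Pareto rule, and its strategyproofness under weak preferences was established by \citet{Feld79a}. The task is to verify rank-basedness in the stated parameter regimes by showing that whenever two profiles $R$ and $R'$ share the same rank matrix, their sets of Pareto-optimal alternatives coincide.

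The starting observation, already used in the main text, is that within a single voter's preference two alternatives are indifferent if and only if they carry the same rank tuple; consequently, the rank tuples assigned by a single voter determine that voter's preference. For $n = 1$ the rank matrix is then merely the voter's list of rank tuples, so the profile is determined outright. For $m = 2$ the Pareto-optimal set consists precisely of the alternatives whose rank multiset contains a tuple with first coordinate zero (i.e., those top-ranked by some voter), which is read directly from the rank matrix.

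For the core cases $m = 3$ with arbitrary $n$ and $m = 4$ with $n = 2$, the rank matrix need not pin down the anonymous profile; for instance, for $m = 3$, $n = 3$ the two cyclic profiles $\{a \succ b \succ c,\ b \succ c \succ a,\ c \succ a \succ b\}$ and $\{a \succ c \succ b,\ b \succ a \succ c,\ c \succ b \succ a\}$ share an identical rank matrix. The plan is to verify by a direct case analysis that in every such ambiguous configuration, all consistent anonymous profiles have the same Pareto-optimal set. For $m = 3$ the enumeration is manageable because there are only thirteen weak orders, and the requirement that each voter's triple of rank tuples correspond to a valid weak order, combined with the fixed rank-tuple multisets, leaves only a limited residual freedom; this freedom turns out to be a cyclic relabeling of alternatives that preserves which alternatives are (uniquely) top-ranked by which voter, and hence preserves Pareto-dominance. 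For $m = 4$ and $n = 2$ an analogous analysis shows that any two anonymous profiles realizing a common rank matrix differ only by a relabeling of pairs of alternatives within each voter's top or bottom indifference class, which again preserves Pareto-dominance. The main obstacle is the completeness of this case analysis, but the small parameters guarantee only finitely many configurations, and in each case the Pareto-optimal set is invariant under the residual ambiguity.
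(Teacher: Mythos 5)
The easy ingredients of your plan are fine (Pareto-optimality, Feldman's strategyproofness, and the case $n=1$), but two of your supporting claims are already wrong in the simple cases and, more importantly, the core structural claims do not hold. For $m=2$, your characterization fails in the presence of ties: with voter $1$ reporting $b\succ a$ and voter $2$ reporting $a\sim b$, alternative $a$ carries the tuple $(0,2)$, hence ``a tuple with first coordinate zero'', yet it is Pareto-dominated by $b$; the case is rescued only because for $m=2$ the rank matrix determines the profile up to a permutation of voters. For $m=3$ the number of voters is unbounded, so the appeal to ``only finitely many configurations'' is unavailable, and the structural claim you substitute for it is false: two profiles with equal rank matrices need not be related by any relabeling of alternatives. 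For example, the Condorcet cycle $\{a\succ b\succ c,\ b\succ c\succ a,\ c\succ a\succ b\}$ together with one extra voter $a\succ b\succ c$ has the same rank matrix as the reversed cycle $\{a\succ c\succ b,\ b\succ a\succ c,\ c\succ b\succ a\}$ with the same extra voter, but the first profile contains a repeated preference type while the second does not, so no relabeling (cyclic or otherwise) maps one onto the other, even though per-voter tops agree. Moreover, ``the ambiguity preserves per-voter tops, hence preserves Pareto-dominance'' is a non sequitur: a voter who top-ranks the third alternative can order the remaining two either way, and that is exactly what Pareto-dominance between them depends on. The paper's proof fills precisely this hole: voters with ties are reconstructible from the rank matrix (the numbers of $(0,2)$ resp.\ $(1,2)$ entries satisfy a linear system with a unique solution), and on the residual strict part $a$ Pareto-dominates $b$ if and only if $a$ has no $(2,1)$ entry and $b$ has no $(0,1)$ entry---a condition on top and bottom counts only, hence on the rank matrix, valid for every $n$.

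The case $m=4$, $n=2$ has the same defect: the assertion that profiles realizing a common rank matrix ``differ only by a relabeling of pairs of alternatives within each voter's top or bottom indifference class'' is neither proved nor true as stated (already for strict preferences the ambiguity permutes middle alternatives, and in general even the identity of the dominating alternative can change between $R$ and $R'$). The paper does not classify the ambiguity at all; it proves the weaker invariance statement that if $a$ is Pareto-dominated in $R$, then $a$ is Pareto-dominated---possibly by a different alternative---in every $R'$ with $r^*(R')=r^*(R)$, via an explicit rank-tuple argument. To repair your plan you would need to replace the classification of the ambiguity by such a direct proof that being Pareto-dominated is a function of the rank matrix, with an argument that works for arbitrary $n$ when $m\leq 3$.
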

\begin{proof}
	The Pareto rule is known to satisfy Pareto-optimality and strategyproofness, regardless of the number of alternatives or voters \shortcite<see, e.g.,>{BSS19a}. Hence, it only remains to show that it also satisfies rank-basedness under the restrictions on $n$ and $m$, for which we use a case distinction.\bigskip
	
	\textbf{Case 1: $m\leq 2$}
	
	For $m=1$, rank-basedness is obviously no restriction, and if $m=2$, the rank vector of the single alternative determines all preference relations (except that we do not know which voter submits which preference relation). If an alternative $a$ is uniquely top-ranked by a voter, its rank vector contains a $(0,1)$ entry, and thus, the rank vector of the other alternative $b$ must contain a $(1,1)$ entry. Similarly, if $a$ has a $(0,2)$ entry, a voter is indifferent between both alternatives and thus, $b$ has also a $(0,2)$. Finally, we can apply a symmetric argument to the first case if the rank vector $a$ contains a $(1,1)$ entry, and thus, we can reconstruct a unique profile (up to renaming the voters) given a rank matrix. Hence, the Pareto rule is rank-based if $m=2$.\bigskip
	
	\textbf{Case 2: $m=3$}
	
	Next, we focus on the case that $m=3$ and consider an arbitrary rank matrix $Q$. First note that $Q$ can only have the following entries: $(0,3)$, $(0,2)$, $(1,2)$, $(0,1)$, $(1,1)$, and $(2,1)$. Many of these entries specify the preferences of the voters. For instance, the $(0,3)$ entry entails that a voter is completely indifferent between all alternatives. Consequently, we can add a completely indifferent voter for every $(0,3)$ entry in the rank vector of an alternative $a$, and remove all these entries from $Q$ afterwards. Also, the $(0,2)$ entries in the rank vector of $a$ specify a lot of information: there must be a voter who top-ranks $a$ and another alternative $x$ and bottom-ranks the last alternative $y$ uniquely. We use this observation to formulate a system of linear equations. Let therefore $n_a$, $n_b$, and $n_c$ denote the number of $(0,2)$ entries in the rank vector of the respective alternatives. Moreover, let $x_{ab}$, $x_{ac}$, and $x_{bc}$ denote the number of voters who top-rank both alternatives in the index. The following equations must hold for every profile $R$ with $r^*(R)=Q$. 
	\begin{align*}
	n_a&= x_{ab}+x_{ac}\\
	n_b&= x_{ab}+x_{bc}\\
	n_c&= x_{bc}+x_{ac}
	\end{align*}
	
	It can easily be checked that the unique solution of this system of equations is $x_{ab}=\frac{n_a+n_b-n_c}{2}$, $x_{bc}=\frac{n_b+n_c-n_a}{2}$, and $x_{ac}=\frac{n_a+n_c-n_b}{2}$. Since this solution is unique, these entries determine the preference relations of several voters: for instance, there must be $x_{ab}$ voters who are indifferent between $a$ and $b$, and prefer both alternatives to $c$. A symmetric argument applies also for all $(1,2)$ entries. Hence, we can now remove these entries from $Q$, as well as the corresponding $(2,1)$ and $(0,1)$ entries, to derive a reduced rank matrix. 
	
	After the last step, $Q$ only consists of $(0,1)$, $(1,1)$, and $(2,1)$ entries, which means that all remaining preferences are strict. Unfortunately, these entries do not necessarily entail a unique profile, but we can use all our observations so far to check for an arbitrary pair of alternatives $a$ and $b$ whether $a$ Pareto-dominates $b$. For this, we first construct the preferences involving ties as explained before and check whether one of the voters  prefers $b$ strictly to $a$. If this is the case, $a$ cannot Pareto-dominate $b$ and we are done. Otherwise, we consider the remaining entries in $Q$. First, if $Q$ is empty (i.e., there are no strict preference relations), we can check the Pareto-dominance by considering the preference profile constructed so far. Else, $a$ Pareto-dominates $b$ if and only if $a$ has no $(2,1)$ entry and $b$ has no $(0,1)$ entry. If $a$ has an $(2,1)$ entry or $b$ has an $(0,1)$ entry, then $a$ is uniquely last-ranked or $b$ is uniquely top-ranked by some voter, which prohibits that $a$ Pareto-dominates $b$. Conversely, if none of these entries exist, then $b$ has to be last-ranked whenever $a$ is second-ranked, and thus $a$ Pareto-dominates $b$. Since $a$ and $b$ were chosen arbitrary, we can check Pareto-dominance between alternatives only based on the rank matrix if $m=3$, which shows that the Pareto rule is rank-based in this case.\bigskip
	
	\textbf{Case 3: $m=4$ and $n\leq 2$}
	
	Finally, we show that the Pareto rule is also rank-based if $m=4$ and $n\leq2$. First, if $n=1$, it is trivial to compute the Pareto rule because only the top-ranked alternatives of the single voter are Pareto-optimal, and this information is contained in the rank matrix. Hence, we focus on the case that $n=2$ and show that $\textit{PO}(R)=\textit{PO}(R')$ for all profiles $R$, $R'$ with $r^*(R)=r^*(R')$. Given a rank matrix $Q$, we can therefore compute the Pareto rule on an arbitrary profile $R$ with $r^*(R)=Q$ as the outcome is independent of the choice of $R$. Hence, consider a profile $R$ and assume that alternative $b$ Pareto-dominates $a$ in $R$. The result follows by proving that $a$ is Pareto-dominated in all preference profiles $R'$ with $r^*(R)=r^*(R')$. 
	
	For this, let $(s_{xi}, t_{xi})=r({\succsim_i},x)$ denote the rank tuple of alternative $x$ in the preference relation of voter $i$ and note that $s_{xi}\leq s_{yi}$ if and only if $x\succsim_i y$ for all alternatives $x,y\in A$ and voters $i\in N$. We suppose subsequently that $s_{b1}\leq s_{b2}$; this is without loss of generality as we can just reorder the voters in our analysis. Next, note that the assumption that $b$ Pareto-dominates $a$ implies that $b \succsim_i a$ for all $i\in \{1,2\}$ and that this preference is strict for at least one voter. This means equivalently that $s_{bi}\leq s_{ai}$ for all $i\in I$ and that this inequality is strict for at least one voter. Now, if $s_{b1}\leq s_{b2}\leq \min_{i\in \{1,2\}} s_{ai}$, $b$ Pareto-dominates $a$ in all profiles $R'$ with $r^*(R)=r^*(R')$ because $s_{bi}\leq s_{ai}$ for all $i\in N$ in all such profiles $R'$ and one of these inequalities must be strict. 
	
	Hence, assume that $\min_{i\in \{1,2\}} s_{ai} < s_{b2}$, which means that $s_{b1}\leq s_{a1}<s_{b2}\leq s_{a2}$ because $b$ Pareto-dominates $a$ in $R$. Next, consider a profile $R'$ with $r^*(R')=r^*(R)$ such that $a\succ_i' b$ for some voter $i\in \{1,2\}$. If no such profile $R'$ exists, it is obvious that $b$ Pareto-dominates $a$ in every profile $R'$ with $r^*(R)=r^*(R')$. First, note that $a$ cannot be the uniquely most preferred alternative of voter $i$ in $R'$ because otherwise, $r^*(R)=r^*(R')$ cannot be true. Hence, there is an alternative $c\in A\setminus \{a,b\}$ such that $c\succsim_i'a$. Analogously, voter $i$ cannot uniquely bottom-rank $b$, which means that his preference relation in $R'$ is $c\succsim_i'a\succ_i'b\succsim_i'd$. Furthermore, we have that $s_{b2}\leq s_{a2}$, which means that $a$ is among the least preferred alternatives of the second voter $j$ in $R'$ because there are two alternatives that strictly dominate $b$ in $\succsim_i'$ and $r(\succsim_i',b)=r(\succsim_2,b)$. This means that $c$ Pareto-dominates $a$ in $R'$ because either $s_{b1}<s_{a1}$ or $s_{b2}<s_{a2}$, which means that either voter $i$ strictly prefers $c$ to $a$, or voter $j$ uniquely bottom-ranks $a$. Hence, $a$ is Pareto-dominated in $R'$, which proves our claim. 
\end{proof}

As last result, we discuss the SCF $f^*$ that satisfies Pareto-optimality and strategyproofness but violates support-basedness and that every voter is a nominator. As described in \Cref{remark}, this SCF chooses the maximal alternatives of a transitive dominance relation which slightly strengthens Pareto-dominance. In more detail, we say that an alternative $a$ \emph{dominates} alternative $b$ in a profile $R$ if $a$ Pareto-dominates $b$ or $n-1$ voters prefer $a$ the most while $s_{ab}(R)\geq 2$ and $s_{ba}(R)= 1$. It should be stressed that it is not required that~$a$ is uniquely top-ranked by $n-1$ voters, but only that it is among their best alternatives. Subsequently, we show that $f^*$ satisfies all axioms that we claim.

\begin{proposition}\label{prop:f*}
	The SCF $f^*$ satisfies Pareto-optimality and strategyproofness but violates support-basedness and that every voter is a nominator if $n\geq 3$.
\end{proposition}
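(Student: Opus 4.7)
The plan is to verify the four required properties of $f^*$ in turn, beginning with an auxiliary structural claim. The set $f^*(R)$ of $D$-maximal alternatives is well-defined and non-empty once the dominance relation $D$ is shown to be a strict partial order. Irreflexivity is immediate. For transitivity I would suppose $a$ dominates $b$ and $b$ dominates $c$ and perform a case split on whether each link is Pareto or arises from the $n-1$ criterion. The case where both are of the $n-1$ type is vacuous: if $n-1$ voters top-rank $b$ then $s_{xb}\leq 1$ for every $x\neq b$, contradicting $s_{ab}\geq 2$. The three remaining cases are handled uniformly by observing that Pareto-dominance is monotone in supports, i.e., $a\succsim_j b$ for all $j$ implies $s_{ac}\geq s_{bc}$ and $s_{ca}\leq s_{cb}$, and moreover that every voter top-ranking $b$ also top-ranks $a$; combined with the data of the second link this recovers either Pareto-dominance or an $n-1$-dominance of $c$ by $a$. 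Pareto-optimality of $f^*$ follows immediately, since a Pareto-dominated alternative is in particular $D$-dominated and so excluded from the maximal set.

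For the failure of support-basedness and of the nominator property, one pair of profiles suffices. Take $n=3$, $m=4$, and let $R$ have voters $1$ and $2$ report $a,b,c,d$ and voter $3$ report $d,c,b,a$, while $R'$ has voter $1$ report $a,b,c,d$, voter $2$ report $b,a,c,d$, and voter $3$ report $d,c,a,b$. A direct check yields $s_{xy}(R)=s_{xy}(R')=2$ and $s_{yx}(R)=s_{yx}(R')=1$ for every pair ordered as in $a,b,c,d$, so the support matrices coincide. In $R$ alternative $a$ has the $n-1=2$ top-rankers $1$ and $2$, and with $s_{ax}=2$, $s_{xa}=1$ for $x\in\{b,c,d\}$ it $n-1$-dominates each of $b,c,d$, so $f^*(R)=\{a\}$. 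In $R'$ every alternative has at most one top-ranker and no Pareto-dominance holds, so $f^*(R')=\{a,b,c,d\}$. This gives the support-basedness violation, and the same profile $R$ witnesses that voter $3$, whose top alternative $d$ lies outside $f^*(R)=\{a\}$, is not a nominator.

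The main obstacle is strategyproofness. The strategy is to assume a manipulation $f^*(R')\succ_i f^*(R)$ and derive a contradiction, using that the Pareto rule is Kelly-strategyproof (Feldman) so that any gain must exploit the $n-1$ criterion. Voter $i$'s unilateral deviation can flip Pareto-comparisons only at pairs whose ranking she revised, and can alter the number of top-rankers of any alternative $x$ by at most one. I would split into cases according to which alternatives carry $n-1$ top-rankers in $R$ and in $R'$ and whether voter $i$ belongs to these sets. The pivotal observation is that whenever $n-1$ voters top-rank $a$, the unique remaining voter is the only one who can contribute to $s_{ya}$ for any $y$, which tightly constrains both Pareto-comparisons and the $n-1$ criterion around $a$. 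Within each case the aim is to preclude that the deviation simultaneously inserts an alternative strictly preferred by $i$ into $f^*(R')$ and removes a dispreferred alternative from $f^*(R)$; the bookkeeping on how $i$'s true top set intersects $f^*(R)$ and $f^*(R')$ before and after the deviation is where the argument becomes most delicate, and is the part I expect to be the main source of difficulty.
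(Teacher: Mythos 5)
Your handling of well-definedness, Pareto-optimality, and the two negative properties is sound and essentially the paper's own argument: the transitivity case split (with the vacuous double-$(n-1)$ case and the monotonicity of supports under Pareto-dominance) matches the paper's four cases, and your profile pair does witness the failure of support-basedness and of the nominator property. One small caveat: the proposition is asserted for every $n\geq 3$, so the single $n=3$, $m=4$ example has to be extended to arbitrary $n$ (the paper does this by giving the additional voters the fixed preference $a,b,c,X$); easy, but currently missing.

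The genuine gap is strategyproofness, which is the heart of the proposition and is only announced, not proved. The appeal to Feldman's result gives no real leverage: Kelly-strategyproofness of the Pareto rule constrains manipulations of the Pareto set, but $f^*$ differs from the Pareto rule exactly when the $(n-1)$-criterion is active, and a profitable deviation for $f^*$ need not correspond to any profitable deviation for the Pareto rule (it may simply create or destroy an $(n-1)$-domination so that a different slice of the Pareto set is returned), so ``any gain must exploit the $n-1$ criterion'' is not a usable reduction. The case split you propose (which alternatives have $n-1$ top-rankers in $R$ versus $R'$, and whether $i$ is among them) is never carried out, and you yourself flag the decisive bookkeeping as unresolved. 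For comparison, the paper's proof splits on whether $X_i^+(R)\cap f^*(R)=\emptyset$. In the non-empty case it shows, using transitivity, that a successful manipulation would force some $x\in X_i^+(R)$ to dominate some $y\in f^*(R)\setminus X_i^+(R)$ in $R'$ but not in $R$, which is impossible because $i$'s deviation can only decrease $s_{xy}$, increase $s_{yx}$, and cannot increase the number of top-rankers of $x$. In the empty case the crucial construction is the set $B$ of alternatives top-ranked by \emph{all} voters other than $i$: one shows $B\neq\emptyset$, that $i$'s best element $b$ of $B$ lies in $f^*(R)$, that every alternative $y$ with $y\succ_i a$ (where $a$ is $i$'s best chosen alternative) is $(n-1)$-dominated by some $z\in B$ in $R$ and remains so in $R'$, and finally that $B\cap f^*(R')\neq\emptyset$, which kills any Kelly-improvement. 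Nothing playing the role of $B$ or of these invariance arguments appears in your outline, so as it stands the proposal does not establish the strategyproofness half of the statement.
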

\begin{proof}
	Before discussing the axioms, we first show that $f^*$ is a well-defined SCFs by proving that it chooses the maximal elements of a transitive dominance relation. Hence, consider an arbitrary profile $R$ and three alternatives $a,b,c$ and assume that $a$ dominates $b$ and $b$ dominates $c$. As there are two possibilities on how an alternative dominates another one (i.e., $a$ either Pareto-dominates $b$, or $s_{ab}(R)\geq 2$, $s_{ba}(R)=1$, and $n-1$ voter top-rank $a$), we proceed with a case distinction with respect to the dominance relations between $a$ and $b$ and between $b$ and $c$. First, consider the case that $a$ Pareto-dominates $b$ and $b$ Pareto-dominates~$c$. Then, $a$ Pareto-dominates $c$ as the Pareto-dominance relation is transitive.
	
	Next, consider the case that $a$ Pareto-dominates $b$ and $b$ dominates $c$ because $s_{bc}(R)\geq 2$, $s_{cb}(R)=1$, and $n-1$ voter top-rank $b$. Since every voter prefers $a$ (weakly) to $b$, it follows that $s_{ac}(R)\geq s_{bc}(R)\geq 2$, $s_{ca}(R)\leq s_{cb}(R)=1$ and that $n-1$ voters top-rank $a$. Hence, $a$ either Pareto-dominates $c$ if $s_{ca}(R)=0$ or satisfies the second dominance criterion if $s_{ca}(R)=1$. This means that the dominance relation is also in this case transitive. 
	
	As third case, assume that $b$ Pareto-dominates $c$, and that $s_{ab}(R)\geq 2$, $s_{ba}(R)=1$, and $n-1$ voters top-rank $a$. Since $b$ Pareto-dominates $c$, it follows that $s_{ac}(R)\geq s_{ab}(R)\geq 2$ and $s_{ca}(R)\leq s_{ba}(R)=1$. Hence, transitivity is also in this case satisfied. 
	
	Finally, assume that neither $a$ Pareto-dominates $b$ nor $b$ Pareto-dominates $c$, but $a$ dominates $b$ and $b$ dominates $c$. Consequently, we derive that both $a$ and $b$ are top-ranked by $n-1$ voters. However, this means that at most a single voter prefers $a$ strictly to $b$ and thus, $s_{ab}(R)\leq 1$. This contradicts that $a$ dominates $b$ and thus, this case cannot occur. Hence, the resulting dominance relation is transitive and $f^*$ is a well-defined SCF. 
	
	Next, note that $f^*$ satisfies Pareto-optimality as it is defined by a dominance relation that refines Pareto-dominance. Moreover, no voter is a nominator for $f^*$ because $f^*(R)=\{a\}$ for all profiles $R$ in which $n-1$ voters report $a$ as their uniquely best alternative. The SCF $f^*$ is also not support-based. To this end, consider the profiles $R^1$ and $R^2$, where $X=A\setminus \{a,b,c\}$, and note that $f^*(R^1)=\{a\}\neq \{a,b,c\}=f^*(R^2)$ even though $s^*(R^1)=s^*(R^2)$. \smallskip
	
	\begin{profile}{C{0.1\profilewidth} C{0.23\profilewidth} C{0.23\profilewidth} C{0.27\profilewidth}} 
		$R^1$: & $1$: $c,b,a,X$ &  $2$: $a,b, c, X$ & $[3\dots n]$: $a,b,c,X$
	\end{profile}

	\begin{profile}{C{0.1\profilewidth} C{0.23\profilewidth} C{0.23\profilewidth} C{0.27\profilewidth}}
		$R^2$: & $1$: $c,a,b,X$ &  $2$: $b,a, c, X$ & $[3\dots n]$: $a,b,c,X$ 
	\end{profile}

	Finally, it remains to show that $f^*$ is strategyproof. Assume for contradiction that this is not the case, i.e., there are preference profiles $R$ and $R'$ and a voter $i$ such that ${\succsim_j}={\succsim_j'}$ for all $j\in N\setminus \{i\}$ and $f^*(R')\succ_i f^*(R)$. Moreover, recall that $T_i(R)$ denotes voter $i$'s favorite alternatives in $R$. We proceed with a case distinction with respect to whether $T_i(R)\cap f^*(R)$ is empty or not. First, assume that $T_i(R)\cap f^*(R)$ is non-empty. This means that voter $i$ can only manipulate by deviating to $R'$ if $f^*(R')\subseteq T_i(R)$ and $f^*(R)\not\subseteq T_i(R)$. Since the dominance relation defining $f^*$ is transitive, it follows that there are alternatives $x\in T_i(R)$, $y\in f^*(R)\setminus T_i(R)$ such that $x$ dominates $y$ in $R'$ but not in~$R$. However, this is not possible. If $x$ does not Pareto-dominate $y$ in $R$, there is a voter $j\neq i$ with $y\succ_jx$ and thus, $x$ cannot Pareto-dominate $y$ in $R'$. Furthermore, since $x\succ_i y$, it follows that $s_{xy}(R)\geq s_{xy}(R')$ and $s_{yx}(R)\leq s_{yx}(R')$, and since $x\in T_i(R)$, voter $i$ can also not increase the number of voters who top-rank $x$. Consequently, since $x$ does not dominate $y$ in $R$, it does not dominate $y$ in $R'$. Hence, it follows from the transitivity of the dominance relation defining $f^*$ that $f^*(R')$ cannot be a subset of $T_i(R)$ if $f^*(R)\not\subseteq T_i(R)$, which means that no manipulation is possible in this case. 
	
	Next, assume that $T_i(R)\cap f^*(R)=\emptyset$, i.e., none of voter $i$'s best alternatives are chosen. Because at least one of voter $i$'s best alternatives is Pareto-optimal, it follows that there is a non-empty set of alternatives $B$ such that all voters $j\in N\setminus \{i\}$ top-rank all alternatives in $B$. Moreover, let $a$ denote one of voter $i$'s most preferred alternatives in $f^*(R)$ and let $b$ denote one of voter $i$'s most preferred alternatives in $B$. Observe that all alternatives $x$ with $b \succ_i x$ are Pareto-dominated by $b$ because all voters but $i$ top-rank $b$ and thus, these alternatives are not in $f^*(R)$. Moreover, it holds that $b\in f^*(R)$. Indeed, it could only be Pareto-dominated by alternatives in $B$, but it is voter $i$'s best alternative among these. Moreover, $s_{yb}(R)\leq 1$ for all $y\in A$ because $n-1$ voters top-rank $b$ and hence, it is not dominated. 
	
	As next step, we show that for all alternatives $y\in A$ with $y\succ_i a$ that $y\not\in f^*(R')$. We prove this claim by showing that there is for every such alternative $y$ an alternative $z\in B$ such that $s_{zy}(R)\geq 2$ and $s_{yz}(R)\leq 1$. This implies that $s_{zy}(R')\geq 2$ and $s_{yz}(R')\leq 1$ for all these alternatives because $y\succ_i a\succsim_i b\succsim_i z$, which means that $y\not\in f^*(R')$. Hence, assume for contradiction that there is an alternative $c\in A\setminus f^*(R)$ such that $c\succ_i a$ and $s_{xc}(R)\leq 1$ for all $x\in B$ ($s_{cx}(R)\leq 1$ must be true for all $x\in B$ since $n-1$ voters top-rank these alternatives). Since $c\not\in f^*(R)$ and $s_{xc}(R)\leq 1$ for all $x\in B$, it is Pareto-dominated by an alternative $d$; otherwise $c$ must be chosen. As a consequence of Pareto-dominance, we derive that $s_{xd}(R)\leq s_{xc}(R)\leq 1$ for all $x\in B$ and that $d\succsim_i c\succ_i a$. In particular, the last point implies that $d\not\in f^*(R)$ because of the definition of $a$ and hence, we can apply the same argument as for $c$. In more detail, by repeating this argument, we will eventually find a Pareto-optimal alternative $e$ with $s_{xe}(R)\leq 1$ for all $x\in B$ and $e\succ_i a$ because the Pareto-dominance relation is transitive. The definition of $f^*$ shows then that $e\in f^*(R)$, contradicting that $a\succsim_i x$ for all $x\in f^*(R)$. This is the desired contradiction and hence, there is for all alternatives $y\in A$ with $y\succ_i a$ an alternative $z\in B$ such that $s_{zy}(R)\geq 2$ and $s_{yz}(R)\leq 1$. This shows that no alternative with $y\succ_i a$ is in $f^*(R')$.
	
	As a consequence of the last observation, voter $i$ can only manipulate by deviating from $R$ to $R'$ if $x\sim_i a$ for all $x\in f^*(R')$ and there is an alternative $y\in f^*(R)$ with $a\succ_i y$. The latter observation implies that $a\succ_i b$ because all alternatives $x$ with $b\succ_i x$ are Pareto-dominated. By the definition of $b$, we can therefore derive a contradiction by proving that $B\cap f^*(R')\neq \emptyset$. Note for this that all alternatives in $B$ are also in $R'$ top-ranked by $n-1$ voters and thus $s_{xy}(R')\leq 1$ for all $x\in A$, $y\in B$. This means that an alternative $x\in B$ is only not chosen in $f^*(R')$ if it is Pareto-dominated. However, an alternative $x\in B$ can only be Pareto-dominated by another alternative in $B$ because for every alternative $y\in A\setminus B$, there is a voter $j\in N\setminus \{i\}$ such that $x\succ_i y$. Finally, as the Pareto-dominance relation is transitive, it follows that there is a Pareto-optimal alternative in $B$, and thus, $B\cap f^*(R')\neq \emptyset$. Altogether, this proves that $f^*$ is strategyproof. 
\end{proof}

\vskip 0.2in

\end{document}